\newtheorem{theorem}{Theorem}
\newtheorem*{proof}{Proof}
\definecolor{myc1}{rgb}{0,0,0}
\begin{document}

% paper title
\title{Energy-Efficient Probabilistic Semantic Communication Over Visible Light Networks\\ With Rate Splitting}

\author{Zhouxiang Zhao, %~\IEEEmembership{Graduate Student Member,~IEEE,}
        Zhaohui Yang, %~\IEEEmembership{Member,~IEEE,}
        % Mingzhe Chen,~\IEEEmembership{Senior Member,~IEEE,}
        Chen Zhu,
        Xin Tong,
        and Zhaoyang Zhang,~\IEEEmembership{Senior Member,~IEEE}
% \thanks{This work is supported by . \textit{(Corresponding author: Zhaohui Yang)}.}
\thanks{Zhouxiang Zhao, Zhaohui Yang, and Zhaoyang Zhang are with the College of Information Science and Electronic Engineering, Zhejiang University, and also with Zhejiang Provincial Key Laboratory of Info. Proc., Commun. \& Netw. (IPCAN), Hangzhou 310027, China (e-mails: \{zhouxiangzhao, yang\_zhaohui, ning\_ming\}@zju.edu.cn).}
% \thanks{Mingzhe Chen is with Department of Electrical and Computer Engineering and Institute for Data Science and Computing, University of Miami, Coral Gables, FL 33146, USA (e-mail: mingzhe.chen@miami.edu).}
\thanks{Chen Zhu is with School of Communication Engineering, Hangzhou Dianzi University, Hangzhou 310018, China, and also with Polytechnic Institute, Zhejiang University, Hangzhou 310015, China (e-mail: zhuc@zju.edu.cn).}
\thanks{Xin~Tong is with the Department of Electrical Engineering, Chalmers University of Technology, Gothenburg, Sweden (e-mail: xinto@chalmers.se).}
% \thanks{M. Shikh-Bahaei is with the Centre for Telecommunications Research, Department of Engineering, King's College London, WC2R 2LS, UK (e-mail: m.sbahaei@kcl.ac.uk).}
% \thanks{W. Xu is with National Mobile Communications Research Laboratory, Southeast University, Nanjing 211189, China (e-mail: wxu@seu.edu.cn).}
}

% make the title area
\maketitle

\begin{abstract}
Visible light communication (VLC) is emerging as a key technology for future wireless communication systems due to its unique physical-layer advantages over traditional radio-frequency (RF)-based systems. However, its integration with higher-layer techniques, such as semantic communication, remains underexplored. This paper investigates the energy efficiency maximization problem in a resource-constrained VLC-based probabilistic semantic communication (PSCom) system. In the considered model, light-emitting diode (LED) transmitters perform semantic compression to reduce data size, which incurs additional computation overhead. The compressed semantic information is transmitted to the users for semantic inference using a shared knowledge base that requires periodic updates to ensure synchronization. In the PSCom system, the knowledge base is represented by probabilistic graphs. To enable simultaneous transmission of both knowledge and information data, rate splitting multiple access (RSMA) is employed. The optimization problem focuses on maximizing energy efficiency by jointly optimizing transmit beamforming, direct current (DC) bias, common rate allocation, and semantic compression ratio, while accounting for both communication and computation costs. To solve this problem, an alternating optimization algorithm based on successive convex approximation (SCA) and Dinkelbach method is developed. Simulation results demonstrate the effectiveness of the proposed approach.
\end{abstract}

\begin{IEEEkeywords}
Visible light communications, semantic communications, rate splitting, energy efficiency.
\end{IEEEkeywords}

\IEEEpeerreviewmaketitle

\section{Introduction}
\IEEEPARstart{D}{riven} by the increasing need for high-speed, energy-efficient, and reliable data transfer, the demand for advanced communication systems has grown exponentially in recent years \cite{7833457,8869705,10061665}. Traditional communication methods, while effective, face limitations in terms of energy efficiency, bandwidth, and handling complex data environments \cite{xu2023edge,10283723}. Emerging technologies in areas such as the Internet of Things (IoT), smart cities, and autonomous systems place new demands on communication networks, pushing current radio-frequency (RF)-based systems to their limits \cite{6740844}. Moreover, the growing importance of low-latency, high-throughput data transfer further emphasizes the need for innovation in communication technologies \cite{zhao2025agentic,8972897}. In light of these developments, it is imperative to design communication systems that can support higher data rates while simultaneously reducing energy consumption and maintaining reliable performance.

Recently, visible light communication (VLC) has emerged as a promising candidate to address these challenges \cite{7239528}. VLC leverages the visible light spectrum (380 – 780 nm) for data transmission, a spectrum that is unregulated and abundant compared to the increasingly crowded RF spectrum \cite{8698841}. This approach utilizes existing infrastructure, such as light-emitting diodes (LEDs) used for illumination, to simultaneously provide lighting and data communication, creating an energy-efficient dual-purpose system \cite{7072557,7839766}. Furthermore, VLC is immune to electromagnetic interference, making it ideal for environments such as hospitals or aircraft, where RF signals may cause disruptions. Another significant advantage is its inherent security, since visible light does not penetrate walls, providing a natural boundary against eavesdropping \cite{9070153}.

Complementing VLC’s physical-layer advantages is the emerging paradigm of semantic communication, which aims to optimize data transfer not just by transmitting bits, but by conveying meaning \cite{gunduz2022beyond,qin2021semantic,10915662}. This approach focuses on the semantic content of transmitted information, ensuring that only the most relevant data is sent, thus reducing the communication load and improving efficiency \cite{9955312}. Semantic communication is particularly beneficial in resource-constrained environments or systems requiring context-aware communication, such as edge computing and artificial intelligence (AI)-driven applications \cite{9398576,11130653}. By minimizing redundant information and prioritizing critical data, semantic communication reduces energy consumption and enhances overall system performance \cite{10550151}. In indoor scenarios, where fast and accurate data interpretation is essential, the combination of VLC and semantic communication presents a promising solution for future communication networks.

\subsection{Related Work}
A number of studies have examined the individual benefits of VLC and semantic communication, yet only a limited number of works have investigated the potential of combining these two technologies.

\subsubsection{VLC}
Research into VLC has confirmed its potential for high-speed, secure communication across diverse applications, from indoor wireless networks to vehicular systems \cite{9241073}. 
A significant body of work has focused on characterizing the unique properties of VLC channels. Unlike traditional RF, VLC links are distinguished by their immunity to small-scale fading and negligible Doppler effects \cite{8360928}. 
Modeling efforts have produced sophisticated representations that account for environmental factors such as obstacle geometry \cite{8675971} and challenging conditions like dust scattering in underground mines \cite{9222015}. 
Theoretical work has also advanced our understanding by establishing tight upper bounds on optical wireless channel capacity using novel approximation methods \cite{7323840}.

Security in VLC has been another critical research avenue. Studies have analyzed and enhanced this, for instance, by evaluating the secrecy performance of dual-hop hybrid free space optics (FSO)-VLC systems against eavesdroppers \cite{9888788} and by developing secure coding schemes using polar codes to ensure both physical-layer security and transmission reliability without compromising illumination functionality \cite{8463605}.

To support multi-user scenarios, various multiple access techniques have been adapted for multiple-input multiple-output (MIMO)-VLC systems, considering practical aspects like dynamic-range constraints and dimming control \cite{7562531,10562043}.
Initial explorations centered on orthogonal schemes like orthogonal frequency division multiple access (OFDMA), valued for its high spectrum utilization efficiency \cite{9137669} and its dual utility in both communication and indoor positioning \cite{7859314}. 
Spatial multiplexing techniques such as space division multiple access (SDMA) were also shown to enhance performance by leveraging the spatial distribution of users, proving superior to conventional OFDMA in certain contexts \cite{9141348}.
More recently, non-orthogonal multiple access (NOMA) has been investigated to improve sum rates through sophisticated power allocation \cite{8233180}, though its performance characteristics in VLC differ notably from those in RF systems \cite{10195155}.
Building on this, rate-splitting multiple access (RSMA) has emerged as a highly flexible framework \cite{7470942,9831440,10038476}. 
Studies have demonstrated RSMA's superiority over NOMA and SDMA in terms of weighted sum rate \cite{mao2018rate,9226406}, analyzed its contributions to spectral and energy efficiency under various channel conditions \cite{8907421,8491100,10499828}, explored its use in networks with simultaneous lightwave information and power transfer (SLIPT) \cite{10552701}, and optimized its energy efficiency in broadcast systems under practical quality-of-service (QoS) constraints \cite{9693949}.

\subsubsection{Semantic Communication}
The paradigm of semantic communication has gained considerable traction for its potential to drastically improve data transmission efficiency \cite{11006980}. To facilitate multi-user semantic systems, several multiple access strategies have been investigated, which can be broadly grouped by the underlying technique.

In orthogonal multiple access (OMA)-aided systems, resources are optimized to maximize spectral efficiency through channel assignment \cite{yan2022resource} or to enhance energy efficiency under latency and semantic fidelity constraints \cite{10689487, 9832831}. This has been extended to MEC systems where communication and computation are jointly managed \cite{10287133} and where semantic level selection can be refined to minimize energy consumption \cite{e26050394}.

For SDMA-aided semantic communication, precoding designs in multiple-antenna systems facilitate the transmission of distinct semantic information to multiple users. The performance of such systems has been evaluated using metrics like the age of incorrect information (AoII) \cite{10093842} and has been extended to consider multimodal semantic transmissions \cite{9921202}.

The integration of NOMA has been explored, particularly in heterogeneous scenarios where semantic-aware users coexist with conventional bit-interested users \cite{mu2022heterogeneous, 10158994}. Furthermore, a unified framework has been presented to accommodate diverse datasets and data modalities within NOMA-aided semantic systems \cite{10225385}.

RSMA offers a particularly synergistic pairing with semantic communication. Its intrinsic structure allows common knowledge for all users to be encoded into the common message, while user-specific semantic data is carried in private messages \cite{10032275,10615621}. This architecture has been leveraged to formulate maximization problems for the weighted sum of semantic information rates \cite{10287956} and to optimize the quality of experience (QoE) for specific tasks like image transmission \cite{10303275}.

Lastly, mode division multiple access (MDMA)-aided semantic communication enables information recovery for different users by analyzing resources in a high-dimensional semantic space, demonstrating superior performance over traditional FDMA and NOMA \cite{zhang2023model}. This approach has been applied to complex data types like 3D point clouds \cite{10679082} and developed using deep learning-based methods \cite{10288558}.

\subsubsection{VLC and Semantic Communication}
Despite the individual promise of VLC and semantic communication, their direct integration is a nascent but promising research area. The few pioneering studies have yielded encouraging results in terms of transmission efficiency and robustness. For instance, to address the inherent bandwidth limitations of VLC, image compression techniques specifically tailored for semantic transmission have been introduced, enhancing deployment feasibility by leveraging the high fidelity of semantic systems \cite{10636681}. 
Another approach has focused on optimizing both transmission efficiency and link stability for indoor image transmission, successfully extending the communication coverage of VLC and ensuring robustness even in high-speed scenarios, all without requiring hardware modifications \cite{Chen:24}.

\subsection{Motivation and Contributions}

\begin{table*}[ht]
\centering
\caption{Comparison with Closest Prior Art}
\begin{tabular}{|c|c|c|c|c|}
    \hline
    \textbf{Work} & \textbf{System Model} & \textbf{Communication Paradigm} & \textbf{Multiple Access} & \textbf{Semantic Computational Cost} \\
    \hline
    \cite{9693949} & VLC & Conventional & RSMA & None \\
    \cite{10636681,Chen:24} & VLC & Semantic & None & Ignore \\
    \cite{10032275,10287956} & RF & Semantic & RSMA & Consider as Smooth Convex Function \\
    \textbf{Our Work} & VLC & Semantic & RSMA & Consider as Segmented Linear Function \\
    \hline
\end{tabular}
\label{tb.rw}
\end{table*}

While prior studies \cite{10636681,Chen:24} have explored image transmission within VLC-based semantic communication systems, they have not integrated multiple access techniques, leaving the critical issue of energy efficiency largely unaddressed. To bridge this gap, this paper proposes an energy-efficient design for a RSMA-aided probabilistic semantic communication (PSCom) system over VLC networks. Our objective is to maximize the system's energy efficiency by jointly optimizing communication resources and the associated computation overhead. Notably, many existing works on semantic communication overlook the computational costs of semantic compression, which can lead to incomplete performance comparisons with conventional systems. The novelty of this work, therefore, lies not in the individual technologies of VLC, semantic communication, or RSMA, but in their synergistic integration to solve this previously unaddressed problem. By jointly considering the semantic computation overhead alongside communication overhead in a resource-constrained VLC network, we present a more realistic and comprehensive framework. To clearly delineate this research gap and situate our work, we provide a direct comparison with the closest prior art in Table ~\ref{tb.rw}. As the table illustrates, our work is the first, to our knowledge, to synergistically integrate VLC, RSMA, and semantic communication framework. The main contributions of this paper are summarized as follows:
\begin{itemize}
    \item We propose a framework for a VLC-based PSCom network where RSMA is specifically tailored to manage the crucial task of knowledge base synchronization alongside private data transmission. The common message in RSMA is uniquely used to broadcast updates to the shared probabilistic graphs, which is essential for the PSCom model.
    \item We formulate a comprehensive energy efficiency problem that integrates the semantic compression ratio with its associated computational power cost. To enable this, we derive an effective rate expression for PSCom that uniquely accounts for the overhead from semantic information extraction. The problem involves the joint optimization of transmit beamforming, direct current (DC) bias, rate allocation, and the semantic compression ratio. This creates a multi-layered optimization framework that captures the fundamental trade-offs between communication and computation in a resource-constrained VLC system.
    \item To address this problem, an alternating optimization algorithm is proposed, which iteratively tackles the transmit beamforming design and rate allocation subproblem, as well as the semantic compression ratio and DC bias design subproblem. The successive convex approximation (SCA) method is utilized to solve the transmit beamforming design and rate allocation subproblem efficiently. Subsequently, with the derived optimal DC bias, the Dinkelbach method is employed to address the semantic compression ratio and DC bias design subproblem. Numerical results demonstrate the effectiveness and superiority of the proposed algorithm.
\end{itemize}

The rest of this paper is organized as follows. Section~\ref{Sec:smpf} presents the VLC-based PSCom system model and formulates the energy efficiency maximization problem, considering both communication and computation costs. Section~\ref{Sec:ad} proposes an alternating optimization algorithm based on SCA and the Dinkelbach method to solve this problem. Section~\ref{Sec:sr} validates our proposed algorithm through simulations against several benchmarks, and Section~\ref{Sec:c} concludes the paper.

The main notations used in the paper are summarized in Table \ref{tb:mn}.

% The rest of this paper is organized as follows. Section~\ref{Sec:smpf} presents the system model and problem formulation. To solve the problem, Section~\ref{Sec:ad} illustrates the algorithm design. Simulation results and discussions are provided in Section~\ref{Sec:sr}. Conclusions are finally drawn in Section~\ref{Sec:c}.

\begin{table}[ht]
\renewcommand\arraystretch{1.15} % 设置行高
\centering
\caption{List of Main Notations}
\begin{tabular}{|c||l|} % c for centered math, l for left-aligned description
    \hline
    \textbf{Notation} & \textbf{Description} \\
    \hline \hline
    \multicolumn{2}{|c|}{\textbf{System Parameters}} \\
    \hline
    $\mathcal{N}$ & Set of LEDs, $\mathcal{N} = \{1, \dots, N\}$ \\ \hline
    $\mathcal{K}$ & Set of users, $\mathcal{K} = \{1, \dots, K\}$ \\ \hline
    $h_{n,k}$ & Channel gain between LED $n$ and user $k$ \\ \hline
    $\mathbf{h}_k$ & Channel gain vector for user $k$ \\ \hline
    $P^{\max}$ & Maximum total power budget of the transmitter \\ \hline
    $I_\mathrm{L}, I_\mathrm{U}$ & Lower and upper bounds of the LED drive current \\ \hline
    $B_{\mathrm{DC}}$ & Direct current bias \\ \hline
    $\sigma_k^2$ & Noise power for user $k$ \\ \hline
    $R_k$ & Minimum effective rate demand for user $k$ \\ \hline
    \multicolumn{2}{|c|}{\textbf{PSCom Variables}} \\
    \hline
    $\rho_k$ & Semantic compression ratio for user $k$ \\ \hline
    $\boldsymbol{\rho}$ & Vector of semantic compression ratios $[\rho_1, \dots, \rho_K]^\mathrm{T}$ \\ \hline
    $\rho_k^{\min}$ & Minimum semantic compression ratio for user $k$ \\ \hline
    $Q_k(\rho_k)$ & Computational overhead function for user $k$ \\ \hline
    $P_k^\mathrm{comp}$ & Computation power for user $k$ \\ \hline
    $R_0$ & Minimum rate demand for knowledge updates \\ \hline
    \multicolumn{2}{|c|}{\textbf{RSMA Variables}} \\
    \hline
    $\mathbf{w}_0$ & Beamforming vector for the common message \\ \hline
    $\mathbf{w}_k$ & Beamforming vector for the private message $k$ \\ \hline
    $\mathbf{W}$ & Collection of all beamforming vectors $[\mathbf{w}_0; \dots; \mathbf{w}_K]$ \\ \hline
    $a_0$ & Common rate for knowledge updates \\ \hline
    $a_k$ & Common rate for user $k$'s data \\ \hline
    $\mathbf{a}$ & Vector of all common rates $[a_0, \dots, a_K]^\mathrm{T}$ \\ \hline
    $c_k$ & Achievable rate for common message at user $k$ \\ \hline
    $r_k$ & Achievable rate for private message at user $k$ \\ \hline
    $r_k^\mathrm{eff}$ & Effective semantic rate for user $k$ \\ \hline
    \multicolumn{2}{|c|}{\textbf{Algorithm \& Optimization Variables}} \\
    \hline
    $\alpha, \beta, \gamma$ & Auxiliary variables for SCA in Problem \eqref{eq.tbra2} \\ \hline
    $\delta_k^\mathrm{c}, \zeta_k^\mathrm{c}, \mu_k^\mathrm{c}$ & Auxiliary variables for common rate $c_k$ \\ \hline
    $\delta_k^\mathrm{r}, \zeta_k^\mathrm{r}, \mu_k^\mathrm{r}$ & Auxiliary variables for private rate $r_k$ \\ \hline
    $\lambda$ & Dinkelbach method parameter \\ \hline
\end{tabular}
\label{tb:mn}
\end{table}

\section{System Model and Problem Formulation}\label{Sec:smpf}

\begin{figure}[t]
    \centering
    \includegraphics[width=\linewidth]{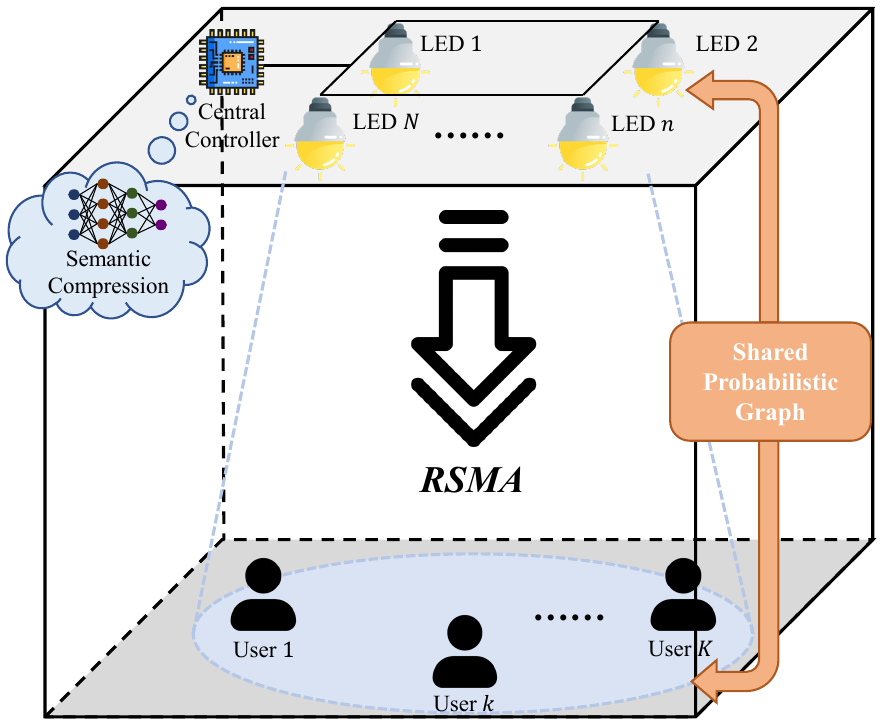}
    \caption{An indoor VLC-based PSCom network with multiple LEDs and users.}
    \label{fig.sm}
\end{figure}

Consider an indoor VLC-based PSCom network consisting of $N$ LEDs and $K$ users, as depicted in Fig.~\ref{fig.sm}. The set of LEDs is denoted by $\mathcal{N}$, while the set of users is denoted by $\mathcal{K}$. All LEDs are connected to a central controller to facilitate cooperative transmission, and each user is equipped with a single photo detector (PD) to receive the optical signals. At the transmitter side, the original data is first compressed using the PSCom technique before being transmitted to the users via LEDs using RSMA. At the receiver side, the users decode the compressed semantic information and perform semantic inference to recover the original data, aided by shared probabilistic graphs.

\subsection{PSCom Model}
In PSCom network, probabilistic graphs are adopted as the knowledge base between transceivers. Compared to conventional communication systems, PSCom exhibits three main properties:
\begin{itemize}
    \item \textbf{Semantic Representation:} Semantic information is compressed using the probabilistic graph, which reduces redundant information. Here, redundant information refers to the messages that can be inferred through the probabilistic graph.
    \item \textbf{Knowledge Transmission:} As both the transmitter and receiver share a common knowledge base, the transmission of both knowledge and information must be considered within PSCom. This necessitates the use of multiple access techniques to facilitate knowledge transmission.
    \item \textbf{Effective Rate Calculation:} In PSCom, each received bit can convey more information than a conventional bit due to the information compression process at the transmitter side. Consequently, the formulation for effective rate calculation differs in PSCom.
\end{itemize}

\begin{figure*}[t]
    \centering
    \includegraphics[width=\linewidth]{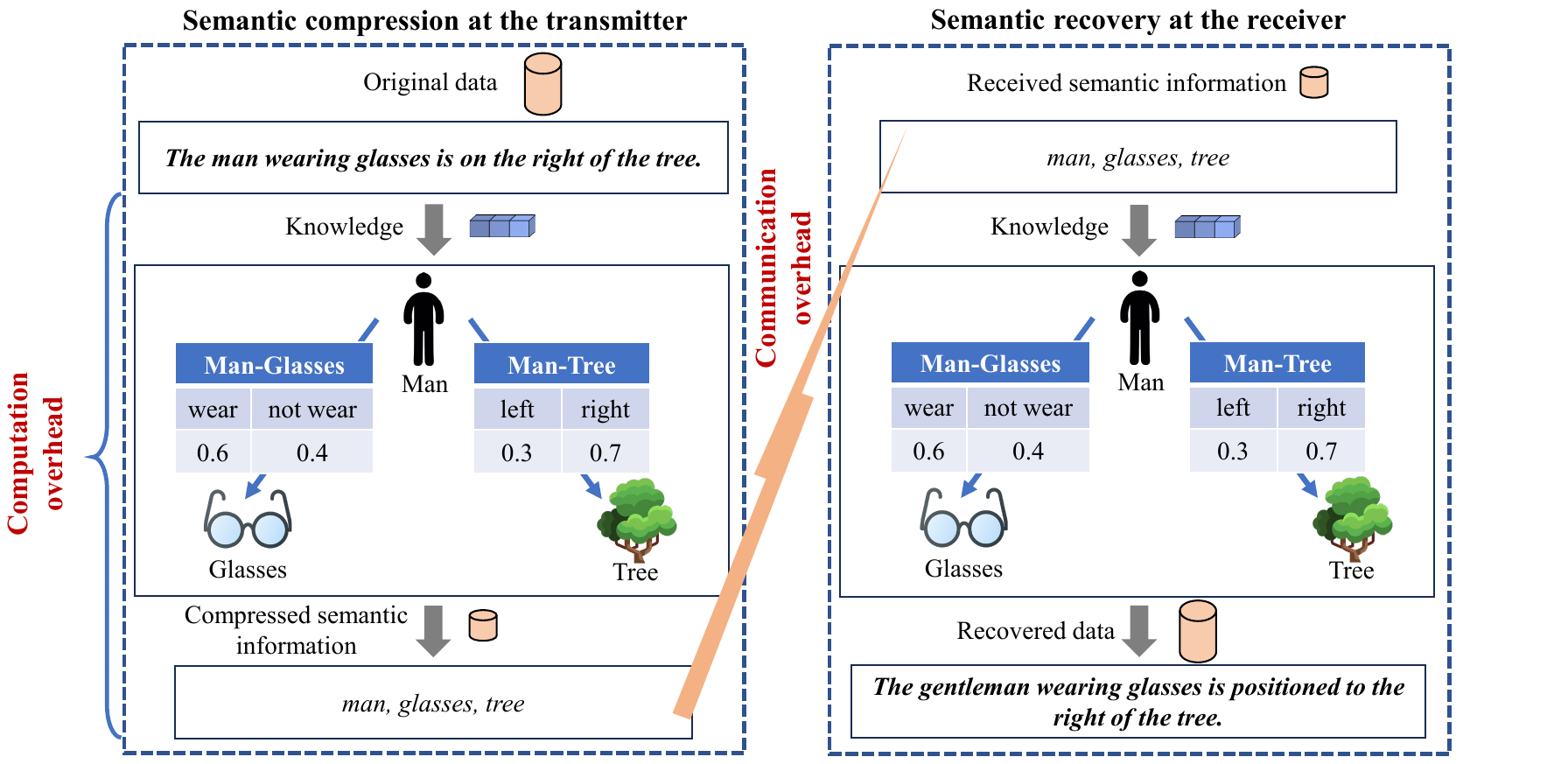}
    \caption{An example of the semantic compression and semantic recovery procedures with the aid of knowledge characterized by probabilistic graph.}
    % \vspace{-1em}
    \label{fig.PSCom}
\end{figure*}

Fig.~\ref{fig.PSCom} illustrates the operational mechanism of PSCom. At the transmitter side, original data undergoes semantic compression utilizing the common knowledge base, represented as probabilistic graphs within the PSCom system. The probabilistic graph functions as a Bayesian network that captures the probabilistic dependencies among various entities. These relationships are derived from a large number of samples and can be updated throughout the information transmission process to maintain knowledge freshness. Notably, multi-layer semantic compression can be achieved using multi-dimensional conditional probabilities, potentially allowing for greater compression of information with increased computational effort. At the receiver side, after receiving the compressed semantic information, semantic inference is performed to recover the data with the assistance of the shared probabilistic graph. The recovered data retains the same semantic meaning as the original data. The PSCom framework is applicable to any data that can be semantically represented by graphs. This includes various modalities, including text, images, and videos. Furthermore, it maintains a distinct separation between semantic source coding and digital channel transmission, compatible with existing digital communication systems. Technical details regarding PSCom can be found in \cite{ZHAO2024107055}.

\subsubsection{Knowledge Update}
As previously indicated, the PSCom framework necessitates regular knowledge updates. The minimum knowledge update rate, denoted by $R_0$, is a pivotal QoS parameter that dictates the currency of the shared knowledge base. The probabilistic graph that serves as the foundation of the PSCom model is not static; it evolves as new data and semantic concepts are observed, and its accuracy can degrade over time. Therefore, $R_0$ signifies the minimum rate necessary to maintain the synchronization and relevance of this knowledge base. This minimum rate can be formally characterized by the age of information (AoI) of the graph.

In practical scenarios, the value of $R_0$ is not a fixed constant but is highly dependent on the application's context. For instance, in a highly dynamic environment where semantic relationships change rapidly, such as in a live video feed or a fast-evolving conversation, a higher $R_0$ is necessary to maintain a low AoI. Conversely, for tasks with relatively static data sources, such as transmitting a document or an archived image, $R_0$ can be set to a very low value after an initial synchronization.

The setting of $R_0$ introduces a fundamental trade-off between communication and computation energy efficiency, which is central to this work. A higher $R_0$ ensures that the shared probabilistic graph has a low AoI, making it more accurate and relevant. This enhanced knowledge base allows the PSCom encoder to achieve a better semantic compression ratio for a given computational effort, thereby improving the computation energy efficiency. However, this benefit comes at the cost of increased communication power consumption, as more resources must be dedicated to broadcasting the common knowledge update message.

\subsubsection{Computational Overhead}
To characterize the transmission rate of semantic information, we introduce the concept of the semantic compression ratio. The semantic compression ratio for user $k$ is defined as
\begin{equation}\label{eq.rho}
    \rho_k=\frac{\mathrm{size}(\mathcal{C}_k)}{\mathrm{size}(\mathcal{D}_k)},
\end{equation}
where $\mathcal{D}_k$ represents the original data of user $k$, $\mathcal{C}_k$ denotes the compressed semantic information of user $k$, and the function $\mathrm{size}(\cdot)$ quantifies the data size.
% A lower semantic compression ratio indicates a higher level of semantic compression, while $\rho_k=1$ signifies that no semantic compression has occurred and the original data is transmitted.

As shown in Fig.~\ref{fig.PSCom}, the PSCom system encompasses both communication and computation procedures. The semantic compression operation inherently introduces additional computational overhead. According to Equation (20) in \cite{ZHAO2024107055}, the computational overhead exhibits a segmented linear relationship with respect to the semantic compression ratio. Assuming that the semantic compression ratio has $M$ levels, the computation overhead function for user $k$ can be mathematically modeled as
\begin{equation}\label{eq.comp}
    Q_k\left(\rho_k\right)=\max_{m=1,\cdots,M}\left(A_{km}\rho_k+B_{km}\right),\forall k\in\mathcal{K}, 
\end{equation}
where $A_{km}<0$ denotes the slope of segment $m$, and $B_{km}$ is a positive constant. These parameters are intrinsically linked to the structure of the probabilistic graph.

\begin{figure}[t]
    \centering
    \includegraphics[width=\linewidth]{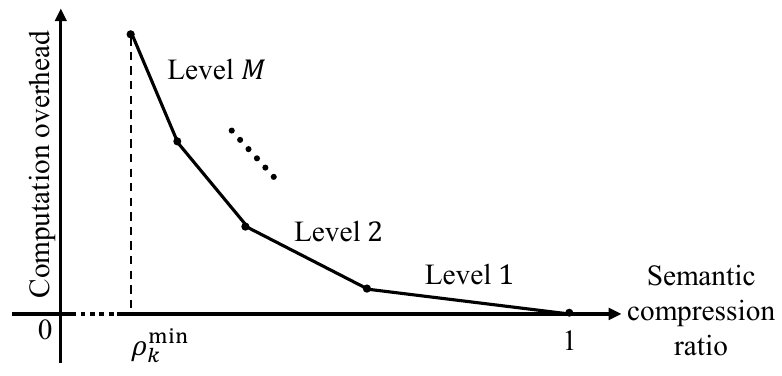}
    \caption{An illustration of the segmented linear relationship between semantic compression ratio and computation overhead in the PSCom system.}
    % \vspace{-1em}
    \label{fig.Comp}
\end{figure}

It is important to note that this piecewise linear model is an analytically validated characteristic adopted from the foundational PSCom framework presented in \cite{ZHAO2024107055}. The model's structure arises from the different levels of logical inference performed on the probabilistic graph to achieve compression. A lower compression ratio requires deeper, more complex inference across multiple probabilistic links, which incurs a greater computational cost, explaining the increasingly steep slope for higher compression levels shown in Fig.~\ref{fig.Comp}. The parameters $A_{km}$ and $B_{km}$ are system-specific constants that are determined by profiling the computational cost of a given trained PSCom model at each inference level and fitting a linear function to the resulting data points. Notably, for $\rho_k=1$, no semantic compression occurs, and we have $Q_k\left(1\right)=0$ for the computation overhead. Additionally, each user is subject to a semantic compression limit, denoted by $\rho_k^{\min}$, which represents the lower bound of the semantic compression ratio.

\subsection{VLC Channel Model}

\begin{figure}[t]
    \centering
    \includegraphics[width=0.7\linewidth]{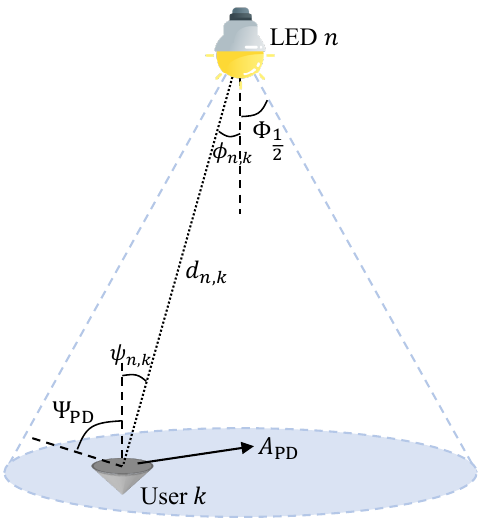}
    \caption{An illustration of the Lambertian radiation model.}
    % \vspace{-1em}
    \label{fig.lambert}
\end{figure}

In an indoor VLC environment, the optical signal propagates from the LED to the user via multiple paths, which include a direct line-of-sight (LoS) path and multiple non-line-of-sight (NLoS) paths caused by reflections. While NLoS components can introduce multipath effects, it is widely documented that the LoS link is dominant and typically carries over 90\% of the total received power when an uninterruptible path exists \cite{9693949,9829848}. Given that our work focuses on the system-level optimization of communication and computation resources rather than on fine-grained channel characterization, we adopt the common and tractable approach of considering only the dominant LoS component. This allows us to focus on the complexities of the resource allocation problem itself.
For the channel model, we assume that the receiving surfaces of all user PDs are oriented parallel to the floor, pointing vertically towards the ceiling. This standard assumption represents a typical indoor usage scenario and ensures the analytical tractability of the channel gain.
Employing the Lambertian radiation model illustrated in Fig.~\ref{fig.lambert}, the channel gain between LED $n$ and user $k$ can be expressed as
% \begin{equation}
%     h_{n,k}=
%     \begin{cases}
%         \frac{(m+1) A_{\mathrm{PD}} \cos ^m\left(\phi_{n,k}\right) G \cos \left(\psi_{n,k}\right) g\left(\psi_{n,k}\right)}{2 \pi d_{n,k}^2}, & 0 \leq \psi_{n,k} \leq \Psi_{\mathrm{PD}},\\
%         0, & \psi_{n,k} \geq \Psi_{\mathrm{PD}},
%     \end{cases}
% \end{equation}
\begin{equation}
    h_{n,k}=\frac{(m+1) A_{\mathrm{PD}} \cos^m\left(\phi_{n,k}\right) \cos \left(\psi_{n,k}\right) G g\left(\psi_{n,k}\right)}{2 \pi d_{n,k}^2}.
\end{equation}
%when $0 \leq \psi_{n,k} \leq \Psi_{\mathrm{PD}}$, and $h_{n,k}=0$ when $\psi_{n,k} \geq \Psi_{\mathrm{PD}}$. 
Here, $m=-\ln 2/\ln\left(\cos\Phi_{\frac{1}{2}}\right)$ is the Lambert index, with $\Phi_{\frac{1}{2}}$ representing the semi-angle of the LED. $A_{\mathrm{PD}}$ denotes the physical area of the PD, while $\phi_{n,k}$ and $\psi_{n,k}$ are the angles of irradiance and incidence, respectively. Moreover, $G$ denotes the optical filter gain, $d_{n,k}$ is the distance between LED $n$ and user $k$, and $g\left(\psi_{n,k}\right)$ denotes the optical concentrator gain, which can be expressed as
\begin{equation}
    g\left(\psi_{n,k}\right)=
    \begin{cases}
        \frac{\kappa^2}{\sin^2\left(\Psi_{\mathrm{PD}}\right)}, & 0 \leq \psi_{n,k} \leq \Psi_{\mathrm{PD}},\\
        0, & \psi_{n,k} \geq \Psi_{\mathrm{PD}},
    \end{cases}
\end{equation}
where $\kappa$ represents the refractive index, and $\Psi_{\mathrm{PD}}$ represents the field of view (FoV) of the PD.

\subsection{RSMA for VLC-Based PSCom}
As previously discussed, knowledge updates are required in the PSCom system to enhance the performance of semantic compression. We denote the knowledge intended for all users by the data stream $x_0$, and the semantic information specific to user $k$ by $x_k$. To facilitate the simultaneous transmission of both knowledge and semantic information, we adopt the rate splitting-aided unicast and multicast (RSUM) scheme \cite{8846706}. In this scheme, knowledge and a portion of the semantic data stream intended for each user are encoded into common message, while the remaining portion is encoded into private message, as shown in Fig.~\ref{fig.rsum}. Specifically, the data stream $x_k$ is divided into a common part $x_{k1}$ and a private part $x_{k2}$. The common message $s_0$ encodes the data streams $\{x_0, x_{11}, \ldots, x_{K1}\}$, while the private message $s_k$ encodes the data stream $x_{k2}$ \cite{7555358}.

\begin{figure}[t]
    \centering
    \includegraphics[width=0.9\linewidth]{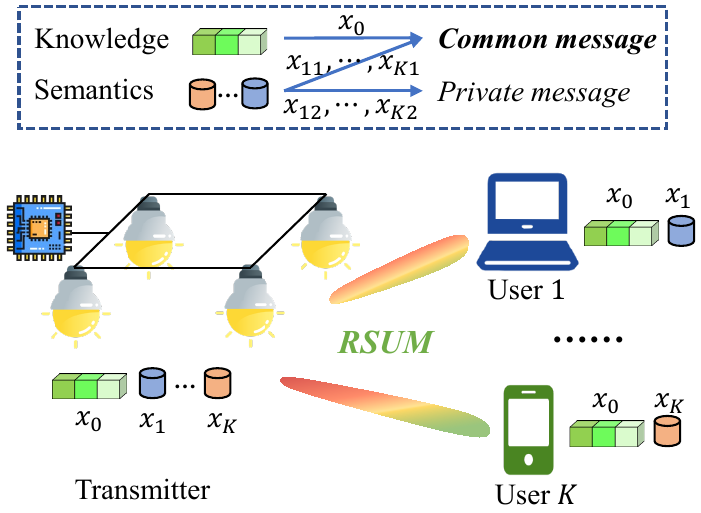}
    \caption{The mechanism of the RSUM scheme.}
    % \vspace{-1em}
    \label{fig.rsum}
\end{figure}

Without loss of generality, we normalize $s_i$ to the range $\left[-1,1\right]$ with zero mean, where $i=\left\{0,1,\cdots,K\right\}$. Given the characteristics of LEDs, the electrical signal must be non-negative and real-valued; thus, a DC bias $B_{\mathrm{DC}}$ is added to the transmitted signal. The transmit signal of the LEDs can be written as
\begin{equation}
    \mathbf{x}=\mathbf{w}_0 s_0 + \sum_{l=1}^K \mathbf{w}_l s_l + \mathbf{B}_{\mathrm{DC}},
\end{equation}
where $\mathbf{w}_0=\left[w_{0,1},\cdots,w_{0,N}\right]^\mathrm{T}$ and $\mathbf{w}_l=\left[w_{l,1},\cdots,w_{l,N}\right]^\mathrm{T}$ are the real-valued beamforming vectors, and $\mathbf{B}_{\mathrm{DC}}=\left(B_{\mathrm{DC}}\right)_{N\times 1}$ denotes the DC bias. To ensure that each LED operates within its dynamic range, the amplitude of $x_i$ must remain bounded between $I_{\mathrm{L}}$ and $I_{\mathrm{U}}$ to prevent clipping, which can be mathematically represented as
\begin{equation}
    \sum_{l=0}^K \left|w_{l,i}\right| \leq \min\left\{B_{\mathrm{DC}}-I_\mathrm{L},I_\mathrm{U}-B_{\mathrm{DC}}\right\}, \forall i\in\mathcal{N},
\end{equation}
where $I_\mathrm{L}$ and $I_\mathrm{U}$ represent the lower and upper bounds of the LED drive current with the linear region.

In the considered VLC scenario, it is assumed that the LEDs utilize intensity modulation, while the users employ a direct detection technique. The transmit signal $\mathbf{x}$ induces each LED to emit an optical signal, the intensity of which is directly proportional to the magnitude of the signal $\mathbf{x}$. At the receiving end, the optical signal $\mathbf{x}$ is detected and subsequently converted into an electric signal by the PD of each user. For user $k$, the received signal can be given as
\begin{equation}
    y_k=\mathbf{h}_k^\mathrm{T}\left(\mathbf{w}_0 s_0 + \sum_{l=1}^K \mathbf{w}_l s_l + \mathbf{B}_{\mathrm{DC}}\right)+n_k,
\end{equation}
where $\mathbf{h}_k=\left[h_{1,k},\cdots,h_{N,k}\right]^\mathrm{T}$ is the channel gain vector, and $n_k$ represents the additive white Gaussian noise (AWGN) with zero mean and variance $\sigma^2_k$.

Since the DC bias carries no information, it is removed at the receiver using alternating current (AC) coupling. In RSUM scheme, the common message $s_0$ is decoded first, treating the other private messages as interference. For user $k$, the rate for decoding the common message $s_0$ can be given by
\begin{equation}\label{eq.commonrate}
    c_k=\log_2\left(1+\frac{2}{\pi e}\frac{\left|\mathbf{h}_k^\mathrm{T}\mathbf{w}_0\right|^2}{\sum_{l=1}^{K}\left|\mathbf{h}_k^\mathrm{T}\mathbf{w}_l\right|^2+\sigma_k^2}\right),
\end{equation}
where $2/\pi e$ is a typical coefficient for achievable rate in VLC system.

The common message comprises both the knowledge data stream and the individual data stream for each user. Let $a_0$ denote the rate of the knowledge data stream and $a_k$ represent the rate of the individual data stream for user $k$. To ensure that all users can successfully decode the common message, we have
\begin{equation}
    a_0+\sum_{l=1}^K a_l\leq c_k,\forall k \in \mathcal{K}.
\end{equation}

After subtracting the successfully decoded common message, the rate for decoding private message $s_k$ at user $k$ can be given by
\begin{equation}
    r_k=\log_2\left(1+\frac{2}{\pi e}\frac{\left|\mathbf{h}_k^\mathrm{T}\mathbf{w}_k\right|^2}{\sum_{l=1,l\neq k}^{K}\left|\mathbf{h}_k^\mathrm{T}\mathbf{w}_l\right|^2+\sigma_k^2}\right).
\end{equation}

The overall rate for user $k$ encompasses both the common message part $a_k$ and the private message part $r_k$. Recalling \eqref{eq.rho}, the effective rate of user $k$ can be given by 
\begin{equation}
    r_k^{\text{eff}}=\frac{a_k+r_k}{\rho_k},
\end{equation}
where the coefficient $1/\rho_k$ accounts for the fact that one bit of obtained semantic information can convey more than one bit (i.e., $1/\rho_k$ bits) of the original information.

\subsection{Power Consumption Model}
In the considered VLC-based PSCom system, the total power consumed at the transmitter comprises both the computation power required for semantic compression and the communication power associated with information transmission.

\subsubsection{Computation Power}
Based on \eqref{eq.comp}, the computation power for semantic compression of user $k$ can be written as
\begin{equation}
    P_k^{\text{comp}}=\eta\max_{m=1,\cdots,M}\left(A_{km}\rho_k+B_{km}\right),
\end{equation}
where $\eta$ is the power coefficient dependent on the chip architecture. Consequently, the total computation power at the transmitter is $P^{\text{comp}}=\sum_{l=1}^K P_l^{\text{comp}}$.

\subsubsection{Communication Power}
In the VLC system, the communication power consists of two components: the AC part and the DC part. The power associated with the AC component can be expressed as
\begin{equation}
    P^{\text{AC}}=\sum_{l=0}^K\left\|\mathbf{w}_l\right\|^2.
\end{equation}
The power consumed by the DC component is given by
\begin{equation}
    P^{\text{DC}}=N U_{\text{LED}} B_{\mathrm{DC}} + P^{\text{cir}},
\end{equation}
where $U_{\text{LED}}$ denotes the forward voltage of the LEDs, and $P^{\text{cir}}$ represents the circuit power which is a constant. Thus, the total communication power is $P^{\text{comm}}=P^{\text{AC}}+P^{\text{DC}}$.

\subsection{Problem Formulation}
Based on the preceding models, the central objective of this work is to maximize the overall energy efficiency of the VLC-based PSCom system. This is achieved by jointly optimizing the transmit beamforming matrices, the DC bias, the common rate allocation, and the per-user semantic compression ratios. We define energy efficiency as the ratio of the sum effective user rates to the total power consumption. This leads to the following optimization problem:
% Given the aforementioned model, our objective is to maximize the energy efficiency of the VLC-based PSCom system by optimizing the beamforming matrix, DC bias, common rate allocation, and semantic compression ratio. The energy efficiency is defined as the ratio of the sum of effective rates to the total power consumption at the transmitter. Mathematically, the optimization problem can be formulated as
\begin{subequations}\label{eq.pf}
    \begin{align}
		\max_{\mathbf{W},B_{\mathrm{DC}},\mathbf{a},\boldsymbol\rho} & \frac{\sum_{k=1}^K r_k^{\text{eff}}}{P^{\text{comp}}+P^{\text{comm}}}\tag{\theequation}\\
		\mathrm{s.t.}\hspace{1.2em}
        & P^{\text{comp}}+P^{\text{comm}}\leq P^{\max},\label{cons.power}\\
        & \sum_{l=0}^K \left|w_{l,i}\right| \leq \min\left\{B_{\mathrm{DC}}-I_\mathrm{L},I_\mathrm{U}-B_{\mathrm{DC}}\right\},\notag\\
        &\hspace{11em} \forall i\in\mathcal{N},\label{cons.LED}\\
        & a_0+\sum_{l=1}^K a_l\leq c_k,\forall k \in \mathcal{K},\label{cons.RSMA}\\
        & a_0\geq R_0,\label{cons.KU}\\
		& r_k^{\text{eff}}\geq R_k,\forall k \in \mathcal{K},\label{cons.minrate}\\
        & a_k\geq 0,\forall k \in \mathcal{K},\label{cons.a0}\\
        & \rho_k^{\min}\leq\rho_k\leq 1,\forall k \in \mathcal{K},\label{cons.rho}
    \end{align}
\end{subequations}
where $\mathbf{W}=[\mathbf{w}_0;\mathbf{w}_1;\cdots;\mathbf{w}_K]$, $\mathbf{a}=\left[a_0,a_1,\cdots,a_K\right]^\mathrm{T}$, and $\boldsymbol\rho=[\rho_1,\cdots,\rho_K]^\mathrm{T}$. Here, $P^{\max}$ denotes the maximum power of the transmitter, $R_0$ represents the minimum rate demand for knowledge updates in PSCom, and $R_k$ is the minimum rate demand for user $k$.
The constraints ensure proper system operation. Specifically, \eqref{cons.power} enforces the total transmitter power budget, $P^{\max}$, while \eqref{cons.LED} guarantees that the LED signals remain within their linear dynamic range. The QoS requirements are captured by a set of rate constraints: \eqref{cons.RSMA} governs common message decoding for the RSMA scheme, \eqref{cons.KU} ensures a minimum rate $R_0$ for knowledge updates, and \eqref{cons.minrate} maintains a minimum effective rate $R_k$ for each user. Finally, \eqref{cons.a0} and \eqref{cons.rho} define the feasible ranges for the rate allocation and semantic compression ratio variables.

The optimization problem in \eqref{eq.pf} is a non-convex fractional program due to the coupled variables and the objective function's structure, making it computationally intractable to find the global optimum directly. Therefore, in the subsequent section, we propose an efficient iterative algorithm to find a high-quality suboptimal solution.

% Constraint \eqref{cons.power} restricts the total computation and communication power to a maximum budget $P^{\max}$. Constraint \eqref{cons.LED} ensures that the LEDs operate within their dynamic range. Constraint \eqref{cons.RSMA} governs the allocation of rates for the common message, while constraint \eqref{cons.KU} guarantees the rate demand for knowledge updates. The minimum effective rate requirement for each user is ensured by constraint \eqref{cons.minrate}, and the semantic compression ratio for each user is limited by constraint \eqref{cons.rho}.

% It is challenging to obtain the optimal solution for problem \eqref{eq.pf} as it is a non-convex fractional problem. Therefore, we develop an efficient algorithm to derive a suboptimal solution in the following section.

\section{Algorithm Design}\label{Sec:ad}
To address problem \eqref{eq.pf}, an alternating optimization algorithm is proposed. This algorithm iteratively solves two subproblems, the transmit beamforming design and rate allocation subproblem, and the semantic compression ratio and DC bias design subproblem, until convergence is achieved.

\subsection{Transmit Beamforming Design and Rate Allocation}
With given semantic compression ratio $\boldsymbol\rho$ and DC bias $B_{\mathrm{DC}}$ in problem \eqref{eq.pf}, the transmit beamforming design and rate allocation subproblem can be expressed as
\begin{subequations}\label{eq.tbra}
    \begin{align}
		\max_{\mathbf{W},\mathbf{a}}\quad & \frac{\sum_{k=1}^K \left(a_k+r_k\right)/\rho_k}{\sum_{l=0}^K\left\|\mathbf{w}_l\right\|^2 +\epsilon_1}\tag{\theequation}\\
		\mathrm{s.t.}\hspace{1.1em}
        & \sum_{l=0}^K\left\|\mathbf{w}_l\right\|^2\leq P^{\max}-\epsilon_1,\label{eq.tbra.c1}\\
        & \sum_{l=0}^K \left|w_{l,i}\right| \leq \epsilon_2,\forall i\in\mathcal{N},\label{eq.tbra.c2}\\
        & \eqref{cons.RSMA}-\eqref{cons.a0},\notag
    \end{align}
\end{subequations}
where $\epsilon_1=\sum_{l=1}^K P_l^{\text{comp}}+N U_{\text{LED}} B_{\mathrm{DC}} + P^{\text{cir}}$ and $\epsilon_2=\min\left\{B_{\mathrm{DC}}-I_\mathrm{L},I_\mathrm{U}-B_{\mathrm{DC}}\right\}$ are constants. Problem \eqref{eq.tbra} remains challenging due to its non-convex fractional nature. To address this, we employ the SCA method as in \cite{9693949}.

To handle the non-convexity of the objective function in problem \eqref{eq.tbra}, we introduce three auxiliary scalar variables, $\alpha,\beta,\gamma$. Then, problem \eqref{eq.tbra} can be reformulated as
\begin{subequations}\label{eq.tbra2}
    \begin{align}
		\max_{\mathbf{W},\mathbf{a},\alpha,\beta,\gamma}\quad & \gamma\tag{\theequation}\\
		\mathrm{s.t.}\hspace{2em}
        & \frac{\alpha^2}{\beta} \geq \gamma,\label{eq.tbra2.c1}\\
        & \sum_{k=1}^K \frac{a_k+r_k}{\rho_k} \geq \alpha^2,\label{eq.tbra2.c2}\\
        & \sum_{l=0}^K\left\|\mathbf{w}_l\right\|^2 +\epsilon_1 \leq \beta,\label{eq.tbra2.c3}\\
        & \eqref{eq.tbra.c1},\eqref{eq.tbra.c2},\eqref{cons.RSMA}-\eqref{cons.a0}.\notag
    \end{align}
\end{subequations}

In problem \eqref{eq.tbra2}, constraints \eqref{eq.tbra2.c1}, \eqref{eq.tbra2.c2}, \eqref{cons.RSMA}, and \eqref{cons.minrate} are still non-convex. Among constraints \eqref{eq.tbra2.c2}, \eqref{cons.RSMA}, and \eqref{cons.minrate}, the non-convexities are caused by the complicated expressions of $c_k$ and $r_k$.

To deal with the non-convexity of constraint \eqref{cons.RSMA}, we introduce slack variables $\delta^\mathrm{c}_k$ to represent $c_k$. Then, constraint \eqref{cons.RSMA} can be transformed into
\begin{align}
    & a_0+\sum_{l=1}^K a_l \leq \delta^\mathrm{c}_k, &\forall k \in \mathcal{K},\label{cons.RSMA.1}\\
    & c_k \geq \delta^\mathrm{c}_k, &\forall k \in \mathcal{K}.\label{cons.RSMA.2}
\end{align}
Among these two constraints, \eqref{cons.RSMA.1} is convex, but \eqref{cons.RSMA.2} is non-convex. Recall \eqref{eq.commonrate}, we further introduce auxiliary variables $\zeta^\mathrm{c}_k,\mu^\mathrm{c}_k$ to reformulate \eqref{cons.RSMA.2} as
\begin{align}
    & \zeta^\mathrm{c}_k \geq 2^{\delta^\mathrm{c}_k}, &\forall k \in \mathcal{K},\label{cons.RSMA.2.1}\\
    & \frac{\left|\mathbf{h}_k^\mathrm{T}\mathbf{w}_0\right|^2}{\mu^\mathrm{c}_k} \geq \frac{\pi e}{2}\left(\zeta^\mathrm{c}_k-1\right), &\forall k \in \mathcal{K},\label{cons.RSMA.2.2}\\
    & \mu^\mathrm{c}_k \geq \sum_{l=1}^{K}\left|\mathbf{h}_k^\mathrm{T}\mathbf{w}_l\right|^2+\sigma_k^2, &\forall k \in \mathcal{K}.\label{cons.RSMA.2.3}
\end{align}
With the above transformations, constraint \eqref{cons.RSMA} can be equivalently replaced by the combination of \eqref{cons.RSMA.1}, \eqref{cons.RSMA.2.1}-\eqref{cons.RSMA.2.3}.

Similarly, for constraints \eqref{eq.tbra2.c2} and \eqref{cons.minrate}, which are related to $r_k$, we introduce auxiliary variables $\delta^\mathrm{r}_k,\zeta^\mathrm{r}_k,\mu^\mathrm{r}_k$ to transform them into
\begin{align}
    & \sum_{k=1}^K \frac{a_k+\delta^\mathrm{r}_k}{\rho_k} \geq \alpha^2,\label{cons.rk.1}\\
    & \frac{a_k+\delta^\mathrm{r}_k}{\rho_k} \geq R_k, &\forall k \in \mathcal{K},\label{cons.rk.2}\\
    & \zeta^\mathrm{r}_k \geq 2^{\delta^\mathrm{r}_k}, &\forall k \in \mathcal{K},\label{cons.rk.3}\\
    & \frac{\left|\mathbf{h}_k^\mathrm{T}\mathbf{w}_k\right|^2}{\mu^\mathrm{r}_k} \geq \frac{\pi e}{2}\left(\zeta^\mathrm{r}_k-1\right), &\forall k \in \mathcal{K},\label{cons.rk.4}\\
    & \mu^\mathrm{r}_k \geq \sum_{l=1,l\neq k}^{K}\left|\mathbf{h}_k^\mathrm{T}\mathbf{w}_l\right|^2+\sigma_k^2, &\forall k \in \mathcal{K}.\label{cons.rk.5}
\end{align}

Among the above reformulations, constraints \eqref{eq.tbra2.c1}, \eqref{cons.RSMA.2.2}, and \eqref{cons.rk.4} remain non-convex. For these three constraints, we utilize first-order Taylor series to approximate them and employ SCA algorithm to obtain the asymptotically optimal solution.

The left-hand side of constraint \eqref{eq.tbra2.c1} can be linearly approximated by
\begin{equation}\label{fots.1}
    \frac{\alpha^2}{\beta} \approx \frac{2\alpha^{(i)}}{\beta^{(i)}}\alpha-\left(\frac{\alpha^{(i)}}{\beta^{(i)}}\right)^2\beta,
\end{equation}
where the superscript $(i)$ indicates the $i$-th iteration of the SCA algorithm. Note that \eqref{fots.1} is a lower-bound approximation.

Similarly, the left-hand side of constraint \eqref{cons.RSMA.2.2} can be approximated by
\begin{align}\label{fots.2}
    &\frac{\left|\mathbf{h}_k^\mathrm{T}\mathbf{w}_0\right|^2}{\mu^\mathrm{c}_k} \approx \frac{2\left(\mathbf{w}_0^{(i)}\right)^\mathrm{T}\mathbf{h}_k\mathbf{h}_k^\mathrm{T}\mathbf{w}_0}{\left(\mu^\mathrm{c}_k\right)^{(i)}}-\left(\frac{\mathbf{h}_k^\mathrm{T}\mathbf{w}_0^{(i)}}{\left(\mu^\mathrm{c}_k\right)^{(i)}}\right)^2\mu^\mathrm{c}_k,\notag\\
    &\hspace{15.5em} \forall k \in \mathcal{K}.
\end{align}
Regarding the left-hand side of constraint \eqref{cons.rk.4}, we have
\begin{align}\label{fots.3}
    &\frac{\left|\mathbf{h}_k^\mathrm{T}\mathbf{w}_k\right|^2}{\mu^\mathrm{r}_k} \approx \frac{2\left(\mathbf{w}_k^{(i)}\right)^\mathrm{T}\mathbf{h}_k\mathbf{h}_k^\mathrm{T}\mathbf{w}_k}{\left(\mu^\mathrm{r}_k\right)^{(i)}}-\left(\frac{\mathbf{h}_k^\mathrm{T}\mathbf{w}_k^{(i)}}{\left(\mu^\mathrm{r}_k\right)^{(i)}}\right)^2\mu^\mathrm{r}_k,\notag\\
    &\hspace{15.5em} \forall k \in \mathcal{K}.
\end{align}

Eventually, at the $i$-th iteration of the SCA algorithm, problem \eqref{eq.tbra} can be approximated by
\begin{subequations}\label{eq.tbra3}
    \begin{align}
		\max_{\mathbf{W},\mathbf{a},\alpha,\beta,\gamma,\boldsymbol\delta,\boldsymbol\zeta,\boldsymbol\mu}\quad & \gamma\tag{\theequation}\\
		\mathrm{s.t.}\hspace{3em}
        &\hspace{-2em} \frac{2\alpha^{(i)}}{\beta^{(i)}}\alpha-\left(\frac{\alpha^{(i)}}{\beta^{(i)}}\right)^2\beta \geq \gamma,\\
        &\hspace{-2em} \frac{2\left(\mathbf{w}_0^{(i)}\right)^\mathrm{T}\mathbf{h}_k\mathbf{h}_k^\mathrm{T}\mathbf{w}_0}{\left(\mu^\mathrm{c}_k\right)^{(i)}}-\left(\frac{\mathbf{h}_k^\mathrm{T}\mathbf{w}_0^{(i)}}{\left(\mu^\mathrm{c}_k\right)^{(i)}}\right)^2\mu^\mathrm{c}_k\notag\\
        & \hspace{1em} \geq \frac{\pi e}{2}\left(\zeta^\mathrm{c}_k-1\right),\forall k \in \mathcal{K},\\
        &\hspace{-2em} \frac{2\left(\mathbf{w}_k^{(i)}\right)^\mathrm{T}\mathbf{h}_k\mathbf{h}_k^\mathrm{T}\mathbf{w}_k}{\left(\mu^\mathrm{r}_k\right)^{(i)}}-\left(\frac{\mathbf{h}_k^\mathrm{T}\mathbf{w}_k^{(i)}}{\left(\mu^\mathrm{r}_k\right)^{(i)}}\right)^2\mu^\mathrm{r}_k\notag\\
        & \hspace{1em} \geq \frac{\pi e}{2}\left(\zeta^\mathrm{r}_k-1\right),\forall k \in \mathcal{K},\\
        &\hspace{-2em} \eqref{cons.KU},\eqref{cons.a0},\eqref{eq.tbra.c1},\eqref{eq.tbra.c2},\eqref{eq.tbra2.c3},\eqref{cons.RSMA.1},\notag\\
        &\hspace{-2em} \eqref{cons.RSMA.2.1},\eqref{cons.RSMA.2.3},\eqref{cons.rk.1},\eqref{cons.rk.2},\eqref{cons.rk.3},\eqref{cons.rk.5},\notag
    \end{align}
\end{subequations}
where $\boldsymbol\delta=\left[\delta^\mathrm{c}_1,\cdots,\delta^\mathrm{c}_K;\delta^\mathrm{r}_1,\cdots,\delta^\mathrm{r}_K\right]$, $\boldsymbol\zeta=[\zeta^\mathrm{c}_1,\cdots,\zeta^\mathrm{c}_K;\zeta^\mathrm{r}_1,\cdots,\zeta^\mathrm{r}_K]$, and $\boldsymbol\mu=\left[\mu^\mathrm{c}_1,\cdots,\mu^\mathrm{c}_K;\mu^\mathrm{r}_1,\cdots,\mu^\mathrm{r}_K\right]$. Problem \eqref{eq.tbra3} is a convex optimization problem, which can be efficiently solved using the existing optimization toolbox. By iteratively solving problem \eqref{eq.tbra3} until its convergence, we can obtain an asymptotically optimal solution for problem \eqref{eq.tbra}. Algorithm~\ref{algo.tbra} provides the overall procedure for solving the transmit beamforming design and rate allocation subproblem.

\begin{algorithm}
\caption{SCA-Based Transmit Beamforming and Rate Allocation Optimization}
\begin{algorithmic}[1]\label{algo.tbra}
\STATE Initialize $\mathbf{W}^{(0)}$ and $\mathbf{a}^{(0)}$. Set iteration index $i=1$.
\STATE Calculate $\alpha^{(0)},\beta^{(0)},\left(\mu^\mathrm{c}_k\right)^{(0)},\left(\mu^\mathrm{r}_k\right)^{(0)}$ according to \eqref{eq.tbra2.c2}, \eqref{eq.tbra2.c3}, \eqref{cons.RSMA.2.3}, \eqref{cons.rk.5}, respectively.
\REPEAT
\STATE Reformulate problem \eqref{eq.tbra3} with $\mathbf{W}^{(i-1)},\alpha^{(i-1)},\beta^{(i-1)},$ $\left(\mu^\mathrm{c}_k\right)^{(i-1)},\left(\mu^\mathrm{r}_k\right)^{(i-1)}$.
\STATE Solve the reformulated problem \eqref{eq.tbra3} using optimization toolbox, and update $\mathbf{W}^{(i)},\alpha^{(i)},\beta^{(i)},\left(\mu^\mathrm{c}_k\right)^{(i)},\left(\mu^\mathrm{r}_k\right)^{(i)}$ accordingly.
\STATE Set $i=i+1$.
\UNTIL{the objective value of problem \eqref{eq.tbra3} converges.}
\STATE \textbf{Output}: The optimized $\mathbf{W}$ and $\mathbf{a}$.
\end{algorithmic}
\end{algorithm}

\subsection{Semantic Compression Ratio and DC Bias Design}
With given transmit beamforming $\mathbf{W}$ and rate allocation $\mathbf{a}$, the semantic compression ratio and DC bias design subproblem can be given by
\begin{subequations}\label{eq.scradcbd}
    \begin{align}
		\max_{\boldsymbol\rho,B_{\mathrm{DC}}}\quad & \frac{\sum_{k=1}^K \left(a_k+r_k\right)/\rho_k}{P^{\text{comp}} + N U_{\text{LED}} B_{\mathrm{DC}} +\epsilon_3}\tag{\theequation}\\
		\mathrm{s.t.}\hspace{1.3em}
        & P^{\text{comp}} + N U_{\text{LED}} B_{\mathrm{DC}}\leq P^{\max}-\epsilon_3,\label{cons.scradcbd.1}\\
        & \sum_{l=0}^K \left|w_{l,i}\right| \leq B_{\mathrm{DC}}-I_\mathrm{L},\forall i\in\mathcal{N},\label{cons.scradcbd.2}\\
        & \sum_{l=0}^K \left|w_{l,i}\right| \leq I_\mathrm{U}-B_{\mathrm{DC}},\forall i\in\mathcal{N},\label{cons.scradcbd.3}\\
        & \eqref{cons.minrate},\eqref{cons.rho},\notag
    \end{align}
\end{subequations}
where $\epsilon_3=\sum_{l=0}^K\left\|\mathbf{w}_l\right\|^2 + P^{\text{cir}}$ is a constant. The structure of problem \eqref{eq.scradcbd} remains intricate. To gain more insight, we initially focus on the the DC bias, which is a scalar in this problem.

Regarding the DC bias, we have the following theorem.
\begin{theorem}\label{theorem.dc}
    The optimal DC bias in problem \eqref{eq.scradcbd} is
    \begin{equation}
        B_{\mathrm{DC}}^*=\max_{i\in\mathcal{N}}\left\{\sum_{l=0}^K \left|w_{l,i}\right|\right\}+I_\mathrm{L}.
    \end{equation}
\end{theorem}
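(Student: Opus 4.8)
The plan is to exploit the observation that, with $\mathbf{W}$ and $\mathbf{a}$ held fixed, the DC bias $B_{\mathrm{DC}}$ enters subproblem \eqref{eq.scradcbd} only through the denominator of the objective and through the three constraints \eqref{cons.scradcbd.1}--\eqref{cons.scradcbd.3}; the numerator $\sum_{k=1}^K (a_k+r_k)/\rho_k$ and the remaining constraints \eqref{cons.minrate} and \eqref{cons.rho} depend on $\boldsymbol\rho$ alone. First I would note that, for any fixed feasible $\boldsymbol\rho$, the numerator is a fixed positive quantity while the denominator $P^{\text{comp}} + N U_{\text{LED}} B_{\mathrm{DC}} + \epsilon_3$ is strictly increasing in $B_{\mathrm{DC}}$ because $N U_{\text{LED}} > 0$. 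Hence the objective is strictly decreasing in $B_{\mathrm{DC}}$, and the optimal strategy is to drive $B_{\mathrm{DC}}$ down to the smallest value the constraints permit.

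Next I would characterize the feasible range of $B_{\mathrm{DC}}$ by rearranging the three constraints. Constraint \eqref{cons.scradcbd.2} gives the lower bound $B_{\mathrm{DC}} \geq \max_{i\in\mathcal{N}}\{\sum_{l=0}^K |w_{l,i}|\} + I_\mathrm{L}$, whereas \eqref{cons.scradcbd.3} and \eqref{cons.scradcbd.1} impose the upper bounds $B_{\mathrm{DC}} \leq I_\mathrm{U} - \max_{i\in\mathcal{N}}\{\sum_{l=0}^K |w_{l,i}|\}$ and $B_{\mathrm{DC}} \leq (P^{\max}-\epsilon_3 - P^{\text{comp}})/(N U_{\text{LED}})$, respectively. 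The decisive point is that the lower bound depends solely on the fixed beamformers, not on $\boldsymbol\rho$. Setting $B_{\mathrm{DC}}$ to this common lower endpoint therefore minimizes the denominator simultaneously for every admissible $\boldsymbol\rho$, which cleanly decouples the optimal DC bias from the compression-ratio optimization and delivers exactly the claimed $B_{\mathrm{DC}}^*$.

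Finally I would verify that this minimizing choice is feasible. Since $B_{\mathrm{DC}}^*$ is the lower endpoint of the feasible interval, pushing $B_{\mathrm{DC}}$ down to it only relaxes the two upper-bound constraints \eqref{cons.scradcbd.1} and \eqref{cons.scradcbd.3}, so whenever the subproblem admits any feasible point the value $B_{\mathrm{DC}}^*$ remains feasible. The only place requiring care, and thus the main (albeit mild) obstacle, is confirming that no hidden coupling through the power budget \eqref{cons.scradcbd.1} could force a strictly larger bias; but since that constraint bounds $B_{\mathrm{DC}}$ from above rather than below, it cannot exclude the minimizer, and the $\boldsymbol\rho$-independence of the active lower bound completes the argument.
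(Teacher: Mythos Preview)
Your proposal is correct and follows essentially the same approach as the paper's proof: both rearrange constraints \eqref{cons.scradcbd.1}--\eqref{cons.scradcbd.3} to obtain the feasible interval for $B_{\mathrm{DC}}$, observe that the objective is monotonically decreasing in $B_{\mathrm{DC}}$ for fixed $\boldsymbol\rho$, and conclude that the optimal bias sits at the $\boldsymbol\rho$-independent lower endpoint. Your presentation is slightly more explicit about the decoupling from $\boldsymbol\rho$ and the feasibility check, but the argument is the same.
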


\begin{proof}
    According to constraint \eqref{cons.scradcbd.2}, we can separate $B_{\mathrm{DC}}$ and rewrite it as
    \begin{equation}\label{eq.theo1}
        B_{\mathrm{DC}} \geq \max_{i\in\mathcal{N}}\left\{\sum_{l=0}^K \left|w_{l,i}\right|\right\}+I_\mathrm{L}.
    \end{equation}
    Similarly, based on constraint \eqref{cons.scradcbd.3}, we have
    \begin{equation}\label{eq.theo2}
        B_{\mathrm{DC}} \leq I_\mathrm{U}-\max_{i\in\mathcal{N}}\left\{\sum_{l=0}^K \left|w_{l,i}\right|\right\}.
    \end{equation}
    Note that the right-hand side of both \eqref{eq.theo1} and \eqref{eq.theo2} are constants in problem \eqref{eq.scradcbd}.
    
    Constraint \eqref{cons.scradcbd.1} can be reformulated as
    \begin{equation}
        B_{\mathrm{DC}} \leq \frac{P^{\max}-\epsilon_3-P^{\text{comp}}}{N U_{\text{LED}}}.
    \end{equation}
    Note that $P^{\text{comp}}$ is related to the semantic compression ratio $\boldsymbol\rho$. To ensure the feasibility of $B_{\mathrm{DC}}$, the following constraint must hold:
    \begin{equation}
        \max_{i\in\mathcal{N}}\left\{\sum_{l=0}^K \left|w_{l,i}\right|\right\}+I_\mathrm{L} \leq \frac{P^{\max}-\epsilon_3-P^{\text{comp}}}{N U_{\text{LED}}}.
    \end{equation}
    Taking into account all the constraints in problem \eqref{eq.scradcbd}, we can obtain the feasible region of $B_{\mathrm{DC}}$ as
    \begin{multline}
        B_{\mathrm{DC}}\in\Bigg[\max_{i\in\mathcal{N}}\left\{\sum_{l=0}^K \left|w_{l,i}\right|\right\}+I_\mathrm{L}, \min\Bigg\{I_\mathrm{U}-\\
        \max_{i\in\mathcal{N}}\left\{\sum_{l=0}^K \left|w_{l,i}\right|\right\},\frac{P^{\max}-\epsilon_3-P^{\text{comp}}}{N U_{\text{LED}}}\Bigg\}\Bigg].
    \end{multline}
    
    Upon examination of the objective function in problem \eqref{eq.scradcbd}, it becomes evident that as the value of $B_{\mathrm{DC}}$ decreases, the value of the objective function increases, under the condition that $\boldsymbol\rho$ remains unchanged. Therefore, the optimal DC bias in problem \eqref{eq.scradcbd} can be written as
    \begin{equation}
        B_{\mathrm{DC}}^*=\max_{i\in\mathcal{N}}\left\{\sum_{l=0}^K \left|w_{l,i}\right|\right\}+I_\mathrm{L},
    \end{equation}
    which is luckily a constant in problem \eqref{eq.scradcbd}.
    $\hfill\blacksquare$
\end{proof}

Substitute the DC bias in problem \eqref{eq.scradcbd} with the obtained optimal solution $B_{\mathrm{DC}}^*$, the problem we need to solve becomes
\begin{subequations}\label{eq.scradcbd2}
    \begin{align}
		\max_{\boldsymbol\rho}\quad & \frac{\sum_{k=1}^K \left(a_k+r_k\right)/\rho_k}{\epsilon_4+\eta\sum_{k=1}^K \max_{m=1,\cdots,M}\left(A_{km}\rho_k+B_{km}\right)}\tag{\theequation}\\
		\mathrm{s.t.}\hspace{1.1em}
        & \eta\sum_{k=1}^K \max_{m=1,\cdots,M}\left(A_{km}\rho_k+B_{km}\right) \leq \epsilon_5,\label{cons.scradcbd2.1}\\
        & \rho_k \leq \frac{a_k+r_k}{R_k},\forall k \in \mathcal{K},\label{cons.scradcbd2.2}\\
        & \rho_k^{\min}\leq\rho_k\leq 1,\forall k \in \mathcal{K},\label{cons.scradcbd2.3}
    \end{align}
\end{subequations}
where $\epsilon_4=N U_{\text{LED}} B_{\mathrm{DC}}^* + \epsilon_3$ and $\epsilon_5=P^{\max} - \epsilon_4$ are constants. Note that the computation overhead function $Q_k\left(\rho_k\right)=\max_{m=1,\cdots,M}\left(A_{km}\rho_k+B_{km}\right)$ represents the pointwise maximum of $M$ linear functions, and this operation preserves convexity. Thus, the numerator and denominator of the objective function in problem \eqref{eq.scradcbd2} are both convex. Furthermore, it is evident that the constraints in problem \eqref{eq.scradcbd2} are all convex. Given the single-ratio fractional structure of the objective function, the Dinkelbach method is adopted to tackle problem \eqref{eq.scradcbd2}.

For clarity, we define
\begin{align}
    f(\boldsymbol{\rho}) &= \sum_{k=1}^K \frac{a_k + r_k}{\rho_k}, \\
    g(\boldsymbol{\rho}) &= \epsilon_4 + \eta \sum_{k=1}^K \max_{m=1,\dots,M} \left(A_{km} \rho_k + B_{km}\right).
\end{align}

In the Dinkelbach method, the transformed problem at the $i$-th iteration is formulated as
\begin{subequations}\label{eq.scradcbd3}
    \begin{align}
		\max_{\boldsymbol\rho}\quad & f(\boldsymbol{\rho}) - \lambda^{(i)} g(\boldsymbol{\rho})\tag{\theequation}\\
		\mathrm{s.t.}\hspace{1.1em}
        & \eta\sum_{k=1}^K Q_k\left(\rho_k\right) \leq \epsilon_5,\label{cons.scradcbd3.1}\\
        & \rho_k \leq \frac{a_k+r_k}{R_k},\forall k \in \mathcal{K},\label{cons.scradcbd3.2}\\
        & \rho_k^{\min}\leq\rho_k\leq 1,\forall k \in \mathcal{K},\label{cons.scradcbd3.3}
    \end{align}
\end{subequations}
where $\lambda$ is a parameter updated at each iteration.

Denote the solution of problem \eqref{eq.scradcbd3} obtained in the $i$-th iteration by $\boldsymbol\rho^{(i)}$. Then, in the subsequent iteration, the parameter $\lambda$ should be updated to
\begin{equation}\label{eq.lambda}
    \lambda^{(i+1)}=\frac{f\left(\boldsymbol{\rho}^{(i)}\right)}{g\left(\boldsymbol{\rho}^{(i)}\right)}.
\end{equation}
This iterative approach continues until $\lambda$ converges.

Nevertheless, problem \eqref{eq.scradcbd3} is still challenging to solve since its objective function is the subtraction of two convex functions, and this operation does not guarantee convexity. Observing the tangible structure of the objective function, it is suitable to use the difference of convex algorithm (DCA) to address problem \eqref{eq.scradcbd3}.

Firstly, we rewrite the objective in problem \eqref{eq.scradcbd3} as
\begin{equation}
    \min_{\boldsymbol\rho} \quad \lambda g(\boldsymbol{\rho})-f(\boldsymbol{\rho}).
\end{equation}
Here, the superscript $(i)$ of $\lambda$ is omitted for the sake of simplicity in notation. Based on the DCA approach, we linearize the convex function $f(\boldsymbol{\rho})$ at each iteration to ensure the convexity of the objective function.

Generally, the first-order Taylor expansion of $f(\boldsymbol{\rho})$ can be expressed as
\begin{equation}
    f(\boldsymbol{\rho}) \approx f\left(\boldsymbol{\rho}^{(j)}\right) + \nabla f\left(\boldsymbol{\rho}^{(j)}\right)^\mathrm{T} \left(\boldsymbol{\rho} - \boldsymbol{\rho}^{(j)}\right),
\end{equation}
where the superscript $(j)$ indicates the iteration of the DCA, and the gradient is given by
\begin{equation}
    \nabla f(\boldsymbol{\rho}) = -\left[\frac{a_k + r_k}{\rho_k^2}\right]_{k=1}^K.
\end{equation}

After linearizing $ f(\boldsymbol{\rho}) $, problem \eqref{eq.scradcbd3} in the $j$-th iteration can be approximated as
\begin{subequations}\label{eq.scradcbd4}
    \begin{align}
		\min_{\boldsymbol\rho}\quad & \lambda g(\boldsymbol{\rho}) - f\left(\boldsymbol{\rho}^{(j)}\right) - \nabla f\left(\boldsymbol{\rho}^{(j)}\right)^\mathrm{T} \left(\boldsymbol{\rho} - \boldsymbol{\rho}^{(j)}\right)\tag{\theequation}\\
		\mathrm{s.t.}\hspace{1em}
        & \eta\sum_{k=1}^K Q_k\left(\rho_k\right) \leq \epsilon_5,\label{cons.scradcbd4.1}\\
        & \rho_k \leq \frac{a_k+r_k}{R_k},\forall k \in \mathcal{K},\label{cons.scradcbd4.2}\\
        & \rho_k^{\min}\leq\rho_k\leq 1,\forall k \in \mathcal{K}.\label{cons.scradcbd4.3}
    \end{align}
\end{subequations}
Problem~\eqref{eq.scradcbd4} is convex and can therefore be efficiently solved using standard convex optimization techniques. Denote the solution obtained at the $(j+1)$-th iteration by $\boldsymbol{\rho}^{(j+1)}$. In each iteration, $\boldsymbol{\rho}^{(j)}$ is updated to $\boldsymbol{\rho}^{(j+1)}$. The iterative process continues until the objective function of problem~\eqref{eq.scradcbd4} converges. The overall procedure for solving the semantic compression ratio and DC bias design subproblem is outlined in Algorithm~\ref{algo.scradcbd}.

\begin{algorithm}
\caption{Semantic Compression Ratio and DC Bias Design Optimization with Dinkelbach Method and DCA}
\begin{algorithmic}[1]\label{algo.scradcbd}
\STATE Calculate the optimal DC bias according to theorem~\ref{theorem.dc}.
\STATE Initialize $\lambda^{(0)}$. Set iteration index $i=1$.
\REPEAT
\STATE Initialize $\boldsymbol{\rho}^{(0)}$. Set iteration index $j=1$.
\REPEAT
\STATE Reformulate problem \eqref{eq.scradcbd4} with $\lambda^{(i-1)}$ and $\boldsymbol{\rho}^{(j-1)}$.
\STATE Solve the reformulated problem \eqref{eq.scradcbd4} using optimization toolbox, and update $\boldsymbol{\rho}^{(j)}$.
\STATE Set $j=j+1$.
\UNTIL{the objective value of problem \eqref{eq.scradcbd4} converges.}
\STATE Obtain solution $\boldsymbol{\rho}^{(i-1)}$.
\STATE Update $\lambda^{(i)}$ according to \eqref{eq.lambda}.
\STATE Set $i=i+1$.
\UNTIL{the objective value of problem \eqref{eq.scradcbd3} converges.}
\STATE \textbf{Output}: The optimized $\boldsymbol{\rho}$ and $B_{\mathrm{DC}}$.
\end{algorithmic}
\end{algorithm}

\subsection{Algorithm Analysis}
Algorithm \ref{algo} outlines the overall optimization approach to solve problem \eqref{eq.pf}. We begin by specifying the initialization and stopping criteria.

\textbf{Initialization:} The alternating optimization algorithm is initialized with a feasible point. Specifically, $\mathbf{W}^{(0)}$ and $\mathbf{a}^{(0)}$ are initialized based on maximum-ratio transmission principles, $B_{\mathrm{DC}}^{(0)}$ is set according to \eqref{cons.LED} with initialized $\mathbf{W}^{(0)}$, and $\boldsymbol\rho^{(0)}$ is set to $\mathbf{1}$, corresponding to no semantic compression. This ensures the algorithm starts from a valid state.

\textbf{Stopping Criteria:} The iterative loops in Algorithms \ref{algo.tbra}, \ref{algo.scradcbd} and \ref{algo} are terminated when the fractional change in their respective objective values between two consecutive iterations, $i-1$ and $i$, is less than a small threshold, $\epsilon$, which is set to $10^{-4}$.

\begin{algorithm}
\caption{Joint Communication and Computation Optimization for VLC-Based PSCom Network}
\begin{algorithmic}[1]\label{algo}
\STATE Initialize $\mathbf{W}^{(0)},B_{\mathrm{DC}}^{(0)},\mathbf{a}^{(0)},\boldsymbol\rho^{(0)}$. Set iteration index $i=1$.
\REPEAT
\STATE With given $B_{\mathrm{DC}}^{(i-1)},\boldsymbol\rho^{(i-1)}$, solve problem \eqref{eq.tbra} with SCA algorithm to obtain solution $\mathbf{W}^{(i)},\mathbf{a}^{(i)}$.
\STATE With given $\mathbf{W}^{(i)},\mathbf{a}^{(i)}$, solve problem \eqref{eq.scradcbd} with Dinkelbach method and DCA to obtain solution $B_{\mathrm{DC}}^{(i)},\boldsymbol\rho^{(i)}$.
\STATE Set $i=i+1$.
\UNTIL{the objective value of problem \eqref{eq.pf} converges.}
\STATE \textbf{Output}: The optimized $\mathbf{W},B_{\mathrm{DC}},\mathbf{a},\boldsymbol\rho$.
\end{algorithmic}
\end{algorithm}

\subsubsection{Convergence Analysis}
Denote the objective value of problem \eqref{eq.pf} at the $i$-th iteration by $V_{\mathrm{obj}}^{(i)}$, the objective value after solving the transmit beamforming design and rate allocation subproblem at the $i$-th iteration by $V_{\mathrm{s1}}^{(i)}$, and the objective value after solving the semantic compression ratio and DC bias design subproblem at the $i$-th iteration by $V_{\mathrm{s2}}^{(i)}$. According to Algorithm~\ref{algo}, we have
\begin{equation}
    V_{\mathrm{obj}}^{(i-1)} \leq V_{\mathrm{s1}}^{(i)} \leq V_{\mathrm{s2}}^{(i)} = V_{\mathrm{obj}}^{(i)},
\end{equation}
which implies that the objective value of problem \eqref{eq.pf} is non-decreasing during the iterations. Furthermore, the objective value is upper-bounded, as it represents the energy efficiency and the total power of the network is constrained. Since the objective value is non-decreasing and upper-bounded, Algorithm~\ref{algo} is guaranteed to converge.

\subsubsection{Computational Complexity Analysis}
The overall computational complexity of the proposed method in Algorithm~\ref{algo} is determined by the iterative solving of its two primary subproblems.
First, for the transmit beamforming and rate allocation subproblem, the complexity is dominated by solving the approximated convex problem \eqref{eq.tbra3}. The cost of this operation is $\mathcal{O}\left(M_1^2 M_2\right)$ \cite{lobo1998applications}, where $M_1 = (N+7)K + N + 4$ is the total number of variables and $M_2 = 9K + 6$ is the total number of constraints. Given that the SCA method requires $N_1$ iterations to converge, the complexity for this subproblem is $\mathcal{O}\left(N_1 N^2 K^3\right)$.
Second, the semantic compression ratio and DC bias subproblem involves solving the convex problem \eqref{eq.scradcbd4} within a Dinkelbach and DCA framework. This has a complexity of $\mathcal{O}\left(M_3^2 M_4\right)$, corresponding to $M_3 = K$ variables and $M_4 = 2K + 1$ constraints. The total complexity for this part is therefore $\mathcal{O}\left(N_2 N_3 K^3\right)$, where $N_2$ and $N_3$ denote the number of iterations for the DCA and the Dinkelbach method, respectively.
Considering that Algorithm~\ref{algo} performs $N_4$ outer-loop iterations, the total computational complexity is the sum of these components, given by $\mathcal{O}\left(N_1 N_4 N^2 K^3 + N_2 N_3 N_4 K^3\right)$.

\section{Simulation Results}\label{Sec:sr}
In the simulation, we consider a setup comprising 4 LEDs and 2 users situated within a room measuring $5\,\mathrm{m} \times 5\,\mathrm{m} \times 3\,\mathrm{m}$. For simplicity, the heights of the LEDs and users are fixed at \SI{3}{m} and \SI{0}{m}, respectively. The spatial coordinates of the LEDs are configured as $(2\,\mathrm{m}, 1.5\,\mathrm{m}, 3\,\mathrm{m})$, $(2\,\mathrm{m}, 3.5\,\mathrm{m}, 3\,\mathrm{m})$, $(3\,\mathrm{m}, 1.5\,\mathrm{m}, 3\,\mathrm{m})$, and $(3\,\mathrm{m}, 3.5\,\mathrm{m}, 3\,\mathrm{m})$, while the user positions are set to $(1.5\,\mathrm{m}, 4\,\mathrm{m}, 0\,\mathrm{m})$ and $(2.5\,\mathrm{m}, 4.5\,\mathrm{m}, 0\,\mathrm{m})$. The parameters for the PSCom model are adopted in accordance with those presented in \cite{ZHAO2024107055}. A comprehensive summary of the system parameters is provided in Table~\ref{tb1}.

\begin{table}[ht]
\renewcommand\arraystretch{1.1} % 设置行高
\centering
\caption{Main System Parameters}
\begin{tabular}{l|c|c}
    \toprule\hline
    \textbf{Parameter}  & \textbf{Symbol} & \textbf{Value} \\
    \hline
    Number of LEDs & $N$ & $4$ \\
    Number of users & $K$ & $2$ \\
    Physical area of the PD & $A_\mathrm{PD}$ & \SI{1}{cm^2}\\
    Optical filter gain & $G$ & $1$ \\
    Refractive index & $\kappa$ & $1.5$ \\
    Semi-angle of the LED & $\Phi_{1/2}$ & $60^{\circ}$ \\
    FoV of the PD & $\Psi_{\mathrm{PD}}$ & $75^{\circ}$ \\
    Minimum drive current & $I_\mathrm{L}$ & \SI{0.1}{A} \\
    Maximum drive current & $I_\mathrm{U}$ & \SI{1}{A} \\
    Total power budget & $P^{\text{max}}$ & \SI{10}{W} \\
    Circuit power & $P^{\text{cir}}$ & \SI{2}{W} \\
    Forward voltage of the LED & $U_{\text{LED}}$ & \SI{3}{V} \\
    Noise power & $\sigma^2_k$ & \SI{-100}{dBm} \\
    Minimum rate demand & $R_k$ & \SI{1}{bps/Hz} \\
    Computation power coefficient & $\eta$ & $1$ \\
    \hline\bottomrule
\end{tabular}
\label{tb1}
\end{table}
 
\begin{figure}[t]
    \centering
    \includegraphics[width=0.8\linewidth]{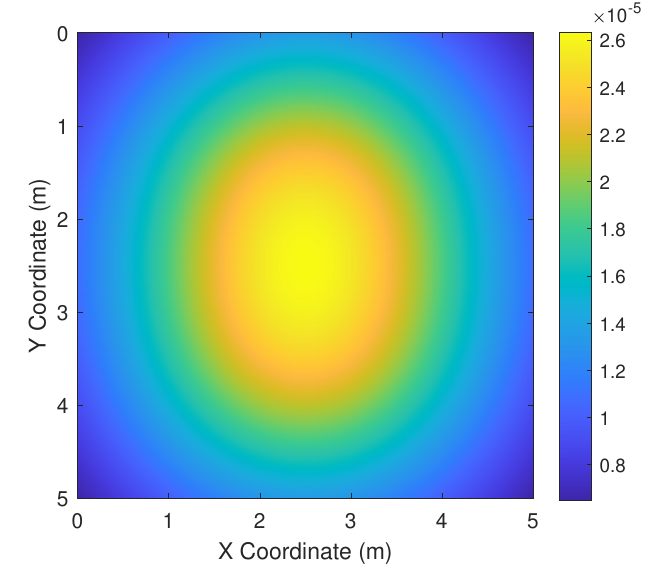}
    \caption{The channel map at zero height under the LED distribution.}
    \label{fig.cm}
    % \vspace{-1em}
\end{figure}

The channel map at floor level corresponding to the deployed LED configuration is depicted in Fig.~\ref{fig.cm}. As shown in the figure, the signal strength exhibits an elliptical attenuation pattern radiating from the central region outward. A pronounced spatial variation in signal intensity is observed across the room, with the highest signal strengths concentrated in the central area and the lowest values occurring at the corners. Notably, the peak signal intensity is approximately three times greater than that at the weakest points. This distribution pattern is primarily attributed to the rectangular arrangement of the LEDs.

To assess the performance of the proposed VLC-based PSCom system, three benchmark schemes are introduced for comparison. The first, denoted as `PSCom-SDMA', replaces RSMA with SDMA. The second, labeled `PSCom-NOMA', utilizes NOMA. The third scheme, `Convention-RSMA', employs RSMA without incorporating semantic compression, transmitting the original data directly. The proposed approach is referred to as `PSCom-RSMA'.

\begin{figure}[t]
    \centering
    \includegraphics[width=\linewidth]{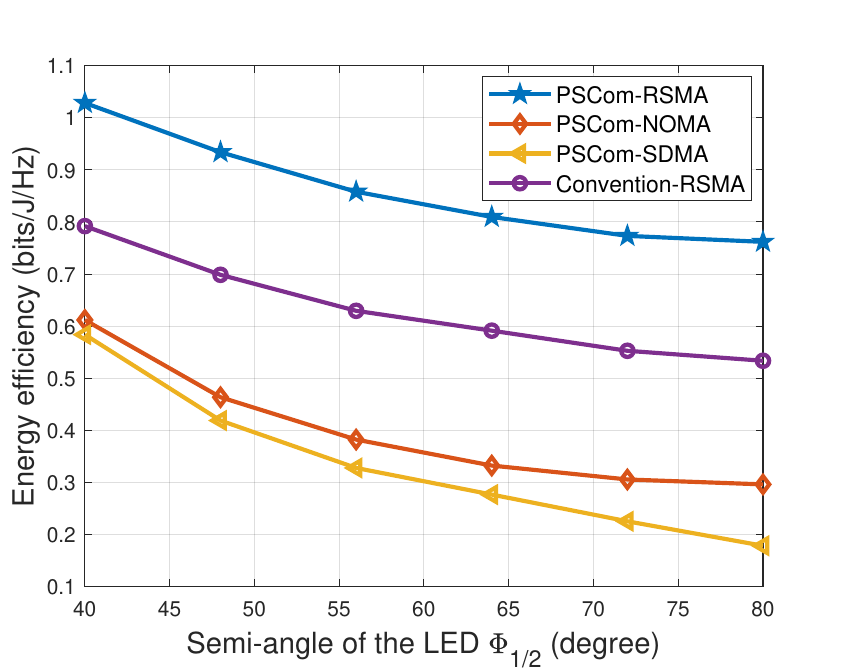}
    \caption{Energy efficiency versus the semi-angle of the LED.}
    \label{fig.Phi_half}
    \vspace{-1.5em}
\end{figure}

Fig.~\ref{fig.Phi_half} illustrates the relationship between energy efficiency and the semi-angle of the LED. It is observed that energy efficiency declines as the LED semi-angle increases across all evaluated schemes. This trend is attributed to the corresponding decrease in the Lambert index, which results in reduced channel gain and, hence, lower energy efficiency. Among the four schemes, the proposed `PSCom-RSMA' consistently achieves the highest energy efficiency, thereby demonstrating the effectiveness of the proposed framework. In contrast, the `PSCom-NOMA' and `PSCom-SDMA' schemes yield inferior performance due to the adoption of NOMA and SDMA, respectively, both of which introduce additional energy consumption, particularly in the transmission of shared knowledge data. The `Convention-RSMA' scheme also underperforms relative to the proposed method, primarily due to its omission of semantic compression, which is an essential mechanism for improving energy efficiency in resource-constrained communication systems.

\begin{figure}[t]
    \centering
    \includegraphics[width=\linewidth]{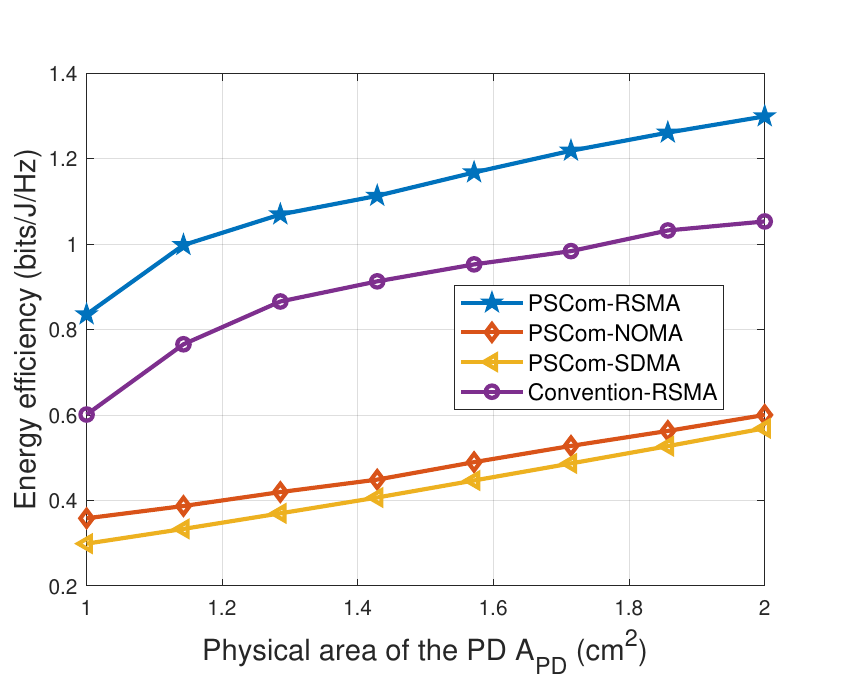}
    \caption{Energy efficiency versus the physical area of the PD.}
    \label{fig.APD}
    \vspace{-1em}
\end{figure}

Fig.~\ref{fig.APD} depicts the relationship between energy efficiency and the physical area of the PD. As shown in the figure, energy efficiency improves across all schemes as the PD area increases. This enhancement is due to the larger PD's ability to capture more incident light, thereby improving the channel gain. Notably, the RSMA-based schemes significantly outperform their NOMA and SDMA counterparts. This performance gain stems from RSMA's capability to encode knowledge update data into the common message, allowing it to be broadcast once to all users. In contrast, both NOMA and SDMA require separate transmissions of this shared data to each user, resulting in higher energy consumption.

\begin{figure}[t]
    \centering
    \includegraphics[width=0.97\linewidth]{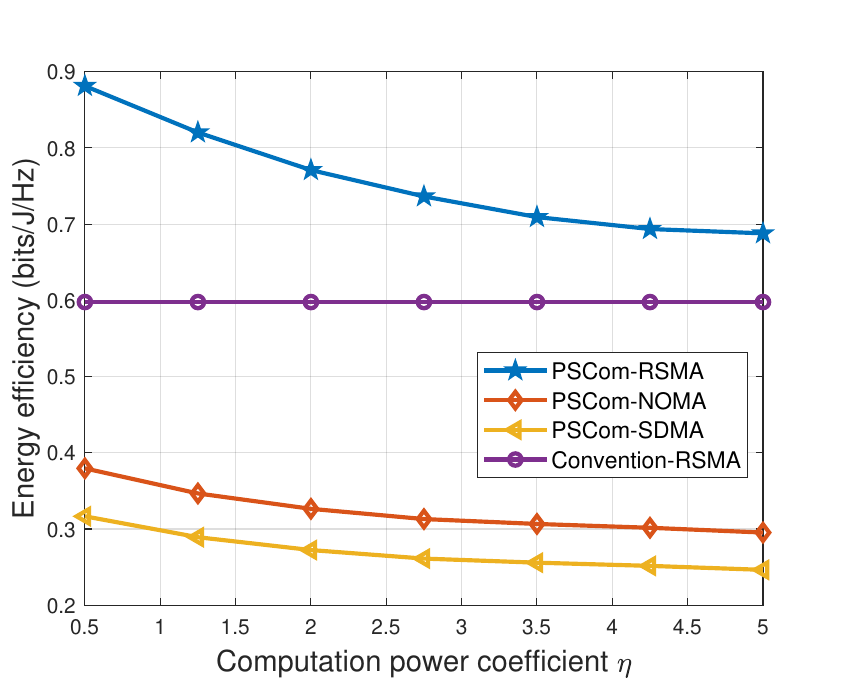}
    \caption{Energy efficiency versus the computation power coefficient.}
    \label{fig.eta}
    \vspace{-1em}
\end{figure}

Fig.~\ref{fig.eta} illustrates the relationship between energy efficiency and the computation power coefficient. As observed, the energy efficiency of the three PSCom-based schemes declines with increasing values of the computation power coefficient. This trend arises because a higher computation power coefficient results in greater energy consumption for semantic compression at a fixed compression ratio, thereby reducing overall energy efficiency. In contrast, the performance of the `Convention-RSMA' scheme remains unaffected by changes in the computation power coefficient, as it does not incorporate semantic compression and, therefore, incurs no associated computational cost.

\begin{figure}[t]
    \centering
    \includegraphics[width=\linewidth]{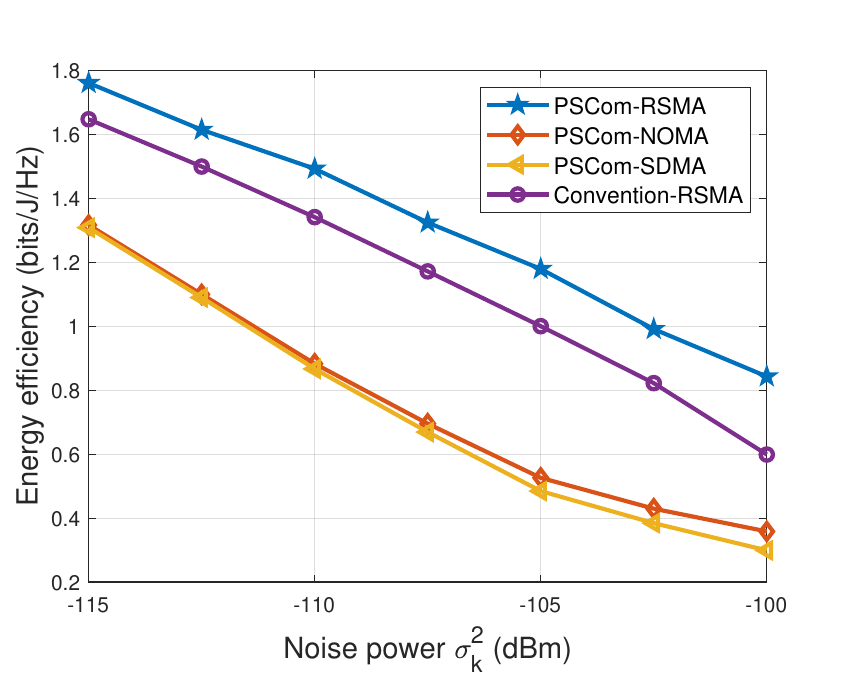}
    \caption{Energy efficiency versus the variance of the AWGN.}
    \label{fig.noise}
    \vspace{-1em}
\end{figure}

Fig.~\ref{fig.noise} presents the relationship between energy efficiency and noise power. As expected, the performance of all schemes deteriorates as noise power increases, owing to the degradation in signal quality and reduced achievable rates. Notably, the performance gap between the `PSCom-NOMA' and `PSCom-SDMA' schemes widens at higher noise levels, with `PSCom-NOMA' exhibiting a relative advantage. This is because NOMA is more resilient to noise due to its power-domain multiplexing and successive interference cancellation, which allows stronger users to decode and subtract weaker users’ signals, thereby partially mitigating the impact of increased noise. In contrast, SDMA relies heavily on spatial separation, which becomes less effective under high noise conditions, leading to more significant performance degradation.

\begin{figure}[t]
    \centering
    \includegraphics[width=\linewidth]{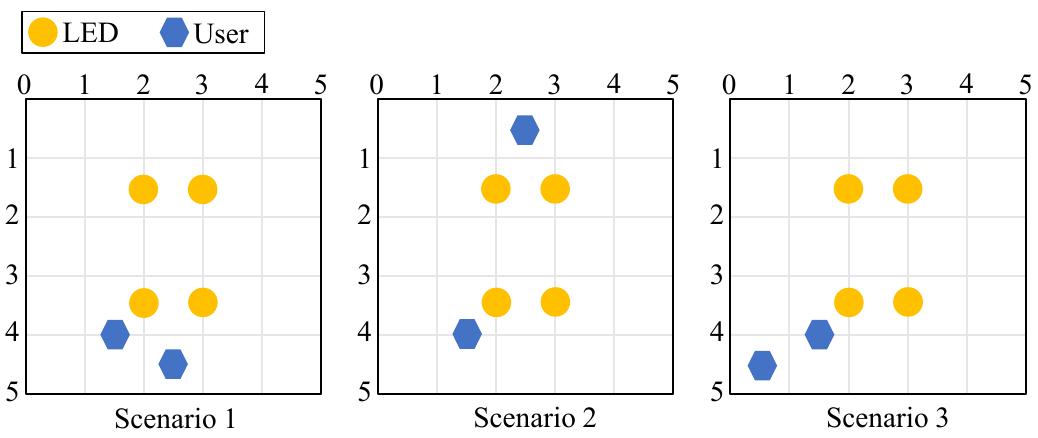}
    \caption{The user distribution in three considered scenarios.}
    \label{fig.scenario}
    \vspace{-1em}
\end{figure}

In addition, three distinct user distribution scenarios are considered to evaluate their impact on system performance, as illustrated in Fig.~\ref{fig.scenario}.

\begin{figure}[t]
    \centering
    \includegraphics[width=\linewidth]{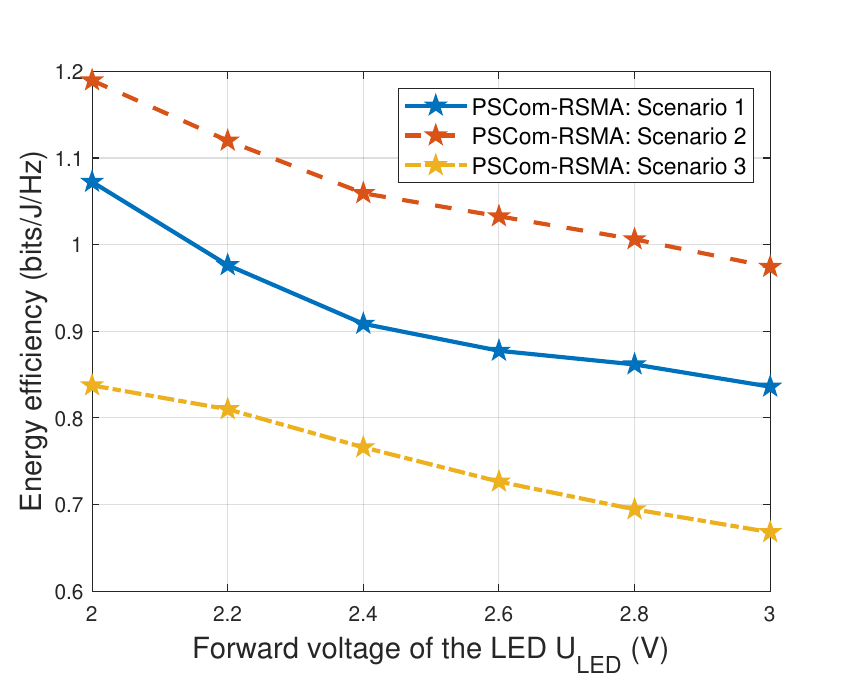}
    \caption{Energy efficiency versus the forward voltage of the LED under different scenarios.}
    \label{fig.s123}
    % \vspace{-1em}
\end{figure}

As shown in Fig.~\ref{fig.s123}, scenario 2 achieves the highest energy efficiency among the three evaluated configurations, followed by scenario 1, while scenario 3 yields the lowest performance. The superior performance of scenario 2 is primarily due to reduced inter-user interference resulting from the increased spatial separation between the users. In contrast, although scenario 3 maintains the same inter-user distance as scenario 1, it suffers from lower channel gains due to suboptimal user placement relative to the LEDs, leading to diminished energy efficiency.

\section{Conclusion}\label{Sec:c}
This paper has developed an energy-efficient framework for VLC-based PSCom networks. Our approach jointly optimizes both communication and computation resources by leveraging RSMA to handle semantic data and shared knowledge. The formulated non-convex maximization problem was effectively tackled by a proposed alternating optimization algorithm that integrates the SCA and Dinkelbach methods. Simulation results have validated the superior performance of our method, highlighting its effectiveness in resource-constrained semantic communication scenarios.

Future work could extend this framework by incorporating more realistic channel models with NLoS and mobility effects. Additionally, as an RSMA-based system, its performance is susceptible to ill-conditioned channels from user clustering, and the PSCom model's integrity depends on successful knowledge synchronization. We also aim to integrate end-to-end semantic quality metrics to directly optimize the trade-off between power consumption and task-oriented reconstruction quality.

\bibliographystyle{IEEEtran}
\bibliography{ref}

\end{document}